\documentclass[11pt]{article}
\usepackage{ifthen}
\usepackage{amsmath, amssymb}
\usepackage{framed}
\usepackage{amsthm}

\usepackage{ifpdf}

\usepackage{fullpage}
\usepackage[utf8]{inputenc}

\usepackage{algorithm,algpseudocode}
\algtext*{EndWhile} 
\algtext*{EndIf} 
\algtext*{EndFor} 
\usepackage{amsfonts}
\usepackage{mathrsfs}
\usepackage{comment}
\usepackage{mathtools}

\usepackage{enumitem}

\usepackage{xcolor}

\usepackage{mdframed}
\usepackage{tikz}

\usepackage{bbm}
\usepackage{dsfont}
\usepackage{hyperref}
\usepackage{subcaption}

\theoremstyle{plain}
\newtheorem{theorem}{Theorem}[section]
\newtheorem{inequality}[theorem]{Inequality}
\newtheorem{lemma}[theorem]{Lemma}

\newtheorem{observation}[theorem]{Observation}

\theoremstyle{definition}

\newcommand{\otilde}{\tilde{O}}
\newcommand{\col}{\mathbf{col}}
\usepackage{cleveref}
\usepackage{lipsum}

\begin{document}
	
	\title{Dynamic $(\Delta+1)$-Coloring Against an Adaptive Adversary:\\
    Faster, Simpler, Worst-Case, and Bounded Recourse\footnote{To appear in SODA 2027.}
    }
	\author{{Yuval Gil}\footnote{This research was done while the author was a postdoc at Reykjavik University. yuval.gil@gmail.com} \and {Magnús M. Halldórsson}\footnote{Reykjavik University, Iceland. Supported by the Icelandic Research Fund, Grant 2511609. mmh@ru.is}}
	
	\date{}
	
	\maketitle
	
	\begin{abstract}
    We study fully dynamic $(\Delta+1)$-coloring against an adaptive adversary. Behnezhad, Rajaraman, and Wasim (SODA 2025) gave the first algorithm with $\tilde O(n^{8/9})$ amortized update time, later improved to $\tilde O(n^{2/3})$ by Flin and Halldórsson (ICALP 2025). We give a fully dynamic $(\Delta+1)$-coloring algorithm with $\tilde O(\sqrt n)$ \textbf{worst-case} update time. Unlike previous algorithms, it avoids sparse-dense decompositions and relies only on elementary data structures, resulting in a short self-contained algorithm and analysis. Moreover, it avoids global recoloring, resulting in a worst-case recourse of $3$.
\end{abstract}
	\paragraph{Independent and concurrent work.}
	In an independent concurrent work \cite{AssadiY27}, Assadi and Yazdanyar present a fully dynamic $(\Delta+1)$-coloring algorithm with an amortized update time of $\tilde{O}(\sqrt{n})$ against an adaptive adversary. To the best of our knowledge, the two works use entirely disjoint techniques.
	\thispagestyle{empty}
	\clearpage
	\thispagestyle{empty}
	\tableofcontents
	\clearpage
	\setcounter{page}{1}


\section{Introduction}\label{section:introduction}
The problem of $(\Delta+1)$-coloring a graph of maximum degree $\Delta$ is one of the most fundamental problems in the context of graph algorithms. While the problem admits a simple greedy solution in the standard sequential setting, it becomes more challenging in several modern computational models. Consequently, the problem has been studied extensively in restricted computational models
such as distributed \cite{Linial92,BarenboimEPS16,HarrisSS18,CLP20,HalldorssonKMT21}, massively parallel \cite{CFGUZ19,CzumajDP21}, streaming and sublinear time \cite{AssadiCK19,AssadiY26}, and communication complexity \cite{FlinM25,ChangMNS25}. 

We focus on $(\Delta+1)$-coloring in the \emph{fully dynamic} setting in which the input graph undergoes dynamic updates in the form of edge insertions and deletions. The main performance measure of a dynamic algorithm is its \emph{update time} which is defined as the time required by the algorithm to respond to an update. Another important measure is the \emph{recourse}, i.e., the number of output changes per update.

A straightforward implementation of the greedy algorithm has update time $O(\Delta)$, which is $\Theta(n)$ when $\Delta=\Theta(n)$. No deterministic algorithm is known to achieve $o(n)$ update time while using only $\Delta+1$ colors.
For this reason, the research of dynamic $(\Delta+1)$-coloring has mainly focused on randomized algorithms.

The first non-trivial randomized algorithm for $(\Delta+1)$-coloring achieves amortized expected update time $O(\log\Delta)$ \cite{BCHN18}. This guarantee assumes an \emph{oblivious} adversary, which fixes the update sequence before seeing the algorithm's random choices. The update time was subsequently improved to a constant 
in \cite{HP22} (in expectation) and \cite{BGKL22} (in expectation and whp).\footnote{An event $A$ holds \emph{with high probability (whp)} if $\Pr[A]\geq 1-n^{-c}$ for an arbitrarily large constant $c$.}

We instead consider an \emph{adaptive adversary}, which may choose each update after observing the algorithm's previous responses. 
Our guarantees continue to hold even if the adversary observes all random bits generated during previous updates, and therefore all data structures maintained by the algorithm.

This seemingly modest change fundamentally alters the problem. Until very recently, no algorithm was known to beat the trivial $O(n)$ update time.
The first algorithm to achieve an improvement over the trivial bound was presented in \cite{BehnezhadRW25} by Behnezhad, Rajaraman, and Wasim. The algorithm is randomized and achieves an amortized update time of  $\otilde(n^{8/9})$. Shortly after, Flin and Halld{\'{o}}rsson \cite{FlinH25} improved the amortized update time to $\otilde(n^{2/3})$. 

From a technical standpoint, the works of \cite{BehnezhadRW25} and \cite{FlinH25} both rely on the algorithmic framework of sparse-dense decompositions \cite{Reed98}. In \cite{FlinH25}, it is observed that an update time of $O(n^{2/3})$ poses a natural barrier for this approach. This led the authors to ask whether an algorithm with an update time of $o(n^{2/3})$ exists \cite[Problem 1]{FlinH25}.

\paragraph{Our contributions.}
We answer this question affirmatively.
More concretely, we obtain the following:
\begin{theorem}\label{theorem:worst-case}
	There exists a fully dynamic algorithm that maintains a $(\Delta+1)$-coloring against an adaptive adversary in worst-case update time $\otilde(\sqrt{n})$ whp. Furthermore, the algorithm's worst-case recourse is $3$.
\end{theorem}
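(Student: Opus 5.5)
The plan is to handle the only nontrivial event directly. Edge deletions never create conflicts, and an insertion $\{u,v\}$ creates one only when $\col(u)=\col(v)$. In that case we recolor one endpoint, say $v$, possibly displacing at most two further vertices; that is the source of the bound of $3$ on recourse. All randomness is drawn fresh at each update, and the analysis of an update conditions on the current (adversarially chosen) graph and coloring. Hence every probabilistic bound holds against an adaptive adversary, even one that sees all previous random bits.

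\textbf{Data structures.} We keep an adjacency matrix, giving $O(1)$ edge queries. For each color $c$ we keep its class $C_c$ as a list with its size. We keep the current value of $\Delta$; a change in $\Delta$ changes the palette by a single color, which we handle lazily. Every recoloring then costs $O(1)$ to maintain. If $\Delta\le\sqrt n$, we scan $N(v)$ in $O(\Delta)=O(\sqrt n)$ time, find a free color (one exists since $\deg(v)\le\Delta$), and are done. So assume $\Delta>\sqrt n$.

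\textbf{Sampling a good color.} Call a color $c$ \emph{small} if $|C_c|\le 4n/(\Delta+1)=O(\sqrt n)$. By averaging, at least $3/4$ of the $\Delta+1$ colors are small. For $v$, call $c$ \emph{nearly free} if $|N(v)\cap C_c|\le 1$. Suppose $m$ colors appear at least twice in $N(v)$. They use at least $2m$ of the at most $\Delta$ neighbors, so $m\le\Delta/2$, and at least $(\Delta+1)/2$ colors are nearly free. Hence a uniformly random color is small and nearly free with probability at least $1/4$. Testing whether a sampled color is small and nearly free takes $O(\sqrt n)$ time: scan $C_c$ and query the matrix. With $O(\log n)$ samples we find such a color whp, in $\tilde O(\sqrt n)$ worst-case time.
\begin{itemize}
\item If $c$ is absent from $N(v)$, we recolor $v$ to $c$; the recourse is $1$.
\item Otherwise $c$ has a unique neighbor $w\in C_c$. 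We recolor $v$ to $c$ and must now fix $w$.
\end{itemize}

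\textbf{Fixing $w$: the main obstacle.} The difficulty is that $w$ may have very few free colors, so sampling for a free color of $w$ can fail. We intend to let $w$ repeat the nearly-free sampling step exactly once. This yields a small color $c'$ with at most one neighbor $x$ of color $c'$ in $N(w)$; we exclude $c'=c$. Then $x$ must be settled without any further displacement. For this we will use the freshly vacated old color of $v$, or otherwise the colors just released along the chain, and constrain the choice of $c$ and $c'$ so that $x$ has no other neighbor of that color. That constraint is checked in $O(\sqrt n)$ time because the relevant classes are small.

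The crux, and the step we expect to be hardest, is a counting argument: among the $\Omega(\Delta)$ small nearly-free candidates, a constant fraction of choices for $c$, and then for $c'$, must admit such a three-vertex repair. If that counting fails, the fallback is to strengthen the invariant maintained for the coloring instead. For example, we would keep every vertex with at least one free color among its small colors, and restore this invariant in the same worst-case budget.

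Combining the low- and high-$\Delta$ cases gives $\tilde O(\sqrt n)$ worst-case time whp. At most $v$, $w$ and $x$ change color per update, so the worst-case recourse is $3$, as claimed in Theorem~\ref{theorem:worst-case}.
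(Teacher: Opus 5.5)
Your first step is fine: a uniformly random color is small and nearly free for $v$ with probability at least $1/4$. The gap is exactly where you flag it. The three-vertex repair is never established, and the counting claim it relies on is false.

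Your analysis conditions on an arbitrary current graph and coloring, so one bad state suffices. Let $R$ be a clique on $\Delta-1$ vertices, and let $u,v$ be non-adjacent vertices each joined to all of $R$. Take the proper coloring in which $R$ uses $\Delta-1$ distinct colors and $u,v$ share a color $a$; one color $z$ is then absent from the component. Inserting $\{u,v\}$ creates a $K_{\Delta+1}$, and every proper coloring of it uses $z$.

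In your chain, $v$ takes $c$ from $w$, then $w$ takes $c'$ from $x$. The vertex $x$ still sees $c$ at $v$ and $a$ at $u$ (or at $w$, if $x=u$). So neither the vacated color nor the released colors are available to $x$; its only option is $z$. Your sampling hits $z$ with probability $O(1/\Delta)$ per trial, and never if $z$ is not small. Hence $O(\log n)$ trials fail whp.

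The fallback invariant is unattainable for the same reason: in a $K_{\Delta+1}$ no vertex has any free color. Separately, the vacated color $a$ is picked by the adversary and need not be small. Testing it for $x$ is therefore not an $O(\sqrt n)$ operation.

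The missing idea is a mechanism that makes every displacement chain end at a vertex that finds an available color \emph{deterministically} in $\otilde(\sqrt n)$ time. Random trials cannot locate a unique free color.

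The paper samples a set $C$ of chosen vertices with probability $10s/\Delta$, where $s=\sqrt n\log n$. It orients $G[C]$ with out-degree $O(s)$. For each $v\in C$ it keeps a min-heap $H_v$ holding the color histogram of $N(v)-N^{+}(v)$.

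A chosen vertex recolors by adding its out-neighbors' colors to $H_v$ and taking the heap root, which has key $0$ by pigeonhole. Every color change needs to be pushed to only $O(s)$ heaps.

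An unchosen $v$ steals only a cheap $C_v$-good color: no unchosen neighbor has it (checked by scanning $L_j$), and at most one chosen neighbor does. So the displaced vertex is always chosen. The $Q$-good counting observation with $|C_v|\ge 2s$ supplies more than $s/2$ such colors, and Bernstein's inequality bounds the total scanning cost. In the $K_{\Delta+1}$ example above, it is the chosen neighbor's heap that finds $z$.

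The worst-case bound also needs two further pieces. First, overloaded and underloaded vertices are treated separately; underloaded ones steal from a non-underloaded neighbor, which is the source of recourse $3$. Second, each phase's preprocessing is pipelined across the previous phase.
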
 
The improved update time is accompanied by several additional advantages. Unlike previous algorithms, it guarantees the update time in the worst case and has worst-case recourse 3, as it avoids periodic global recoloring. Equally importantly, it dispenses with sparse--dense decompositions entirely, relying only on elementary data structures and a short self-contained analysis.

Beyond its conceptual appeal, this simplification is also significant from a practical perspective. Sparse--dense decompositions rely on sophisticated machinery and hide substantial constant factors, whereas our algorithm uses only elementary data structures. Since the adaptive-adversary model captures many realistic dynamic scenarios \cite{CherapanamjeriS23}, we believe this simplicity makes the algorithm more attractive both theoretically and practically.

Our result also identifies what appears to be a natural limit of the random color-trial paradigm that underlies essentially all known randomized $(\Delta+1)$-coloring algorithms.

\paragraph{Can we do better than $O(\sqrt{n})$?} 
As observed in \cite{FlinH25}, $O(\sqrt{n})$ is an inherent limit for the update time of \emph{natural} $(\Delta+1)$-coloring algorithms against an adaptive adversary. The reason is that $\sqrt{n}$ is the meeting point of two natural strategies for \emph{testing whether a given color $i$ is available} for a given node $v$. The first strategy takes $O(\Delta)$ time and involves going over all of $v$'s neighbors and checking if any of them is colored with $i$; the second strategy takes $O(n/\Delta)$ and involves going over all nodes colored $i$ and checking if any of them is $v$'s neighbor. 

Random color trials have become the central primitive for randomized $(\Delta+1)$-coloring algorithms. They enable efficient discovery of available colors without explicitly scanning neighborhoods or color classes, and underlie algorithms in the distributed setting \cite{BarenboimEPS16,SW10,HarrisSS18,CLP20,HalldorssonKMT21,HKNT22}, the streaming model \cite{AssadiCK19}, and the fully dynamic setting against an oblivious adversary \cite{BCHN18,HP22,BGKL22}.
This observation suggests that the $\sqrt n$ barrier is not merely an artifact of the current analysis, but of the random color-trial paradigm itself. 
Together, these observations suggest that breaking the $\sqrt n$ barrier will require techniques that circumvent the random color-trial paradigm.

An alternative --- and perhaps more realistic --- direction is to establish lower bounds for the problem. In this regard, a recent breakthrough of \cite{BBFKS26} proves conditional lower bounds for dynamic maximal independent set and maximal clique against adaptive adversaries. The authors identify proving analogous lower bounds for dynamic $(\Delta+1)$-coloring as a particularly compelling open problem.
A first step would be to give a non-trivial lower bound for dynamic \emph{deterministic} $(\Delta+1)$-coloring algorithms.

\paragraph{Technical overview.} We start by describing a very simple \emph{amortized} algorithm. Throughout this overview, let us assume that $\Delta=\Omega(\sqrt{n}\log n)$ and that the degree of all nodes is $\Theta(\Delta)$. Here, notice that nodes of degree $o(\Delta)$ are anyway easy to recolor as they have $\Theta(\Delta)$ available colors. 

The algorithm operates in phases of $n$ updates. In every phase, the algorithm maintains a set $C\subseteq V$ of \emph{chosen} nodes. Let us first describe the algorithm under the (incorrect) assumption that throughout the phase, every node $v\in V$ has $\Theta(\sqrt{n}\log n)$ chosen neighbors. Notice that this allows the algorithm to maintain \emph{full information} about the colors in the neighborhood of every  chosen node $v\in C$. This is because whenever a node $v\in V$ changes its color, the algorithm can update the information of all its chosen neighbors within the desired time bound. Thus, recoloring chosen nodes is \emph{trivial}.

As for recoloring unchosen nodes, the idea is simply to \emph{steal} a color from a chosen neighbor, and then recolor that neighbor (which is easier since it is chosen). That is, when an unchosen node $v\in V-C$ is being recolored, the algorithm looks for a color to steal by trying colors uniformly at random until finding a color $i\in [\Delta+1]$ that satisfies: (1) at most $1$ chosen neighbor of $v$ is colored $i$; and (2) no unchosen neighbor of $v$ is colored $i$. To be able to check (2), the algorithm maintains a list of all unchosen nodes colored $i$ for each color $i\in [\Delta+1]$. Furthermore, the algorithm explicitly avoids the \emph{expensive} colors, i.e., colors for which the list length exceeds $10\sqrt{n}$. 

Since the number of expensive colors is bounded by $\sqrt{n}/10$, and the number of chosen neighbors is $\Omega(\sqrt{n}\log n)$, it follows from a simple combinatorial observation that the algorithm finds a color to steal after $O(\Delta\log n/\sqrt{n}\log n)=O(\Delta/\sqrt{n})$ trials whp. Here, at a first glance, it may seem like the recoloring does not gain any advantage over the trivial $O(\Delta)$ procedure: there are $O(\Delta/\sqrt{n})$ many trials, and on each of them the procedure may spend $\Theta(\sqrt{n})$ time. However, we provide a refined analysis based on Bernstein's inequality asserting that this is not the case --- the total time spent by the recoloring procedure is $O(\sqrt{n}\log n)$ whp. 

As alluded to above, the assumption that every node $v\in V$ has $\Theta(\sqrt{n}\log n)$ chosen neighbors is too optimistic. Note, however, that in the static setting this assumption can be easily satisfied whp via a procedure that samples each node into the chosen set $C$ with probability $\Theta(\sqrt{n}\log n/\Delta)$. Therefore, the algorithm can start each phase with a preprocessing procedure that computes a chosen set $C$ with the desired properties, and charge its cost (which is $\otilde(n^{3/2})$ including updates to the relevant data structures) to the rest of the phase. 

This leaves open the challenge of (approximately) keeping the assumption alive through the phase. To that end, whenever an unchosen node violates the assumption, it is simply added to the chosen set $C$. As we show, the number of such violations is $O(\sqrt{n})$ which allows to uphold the amortized update time for these additions into $C$. This leaves chosen nodes as the only possible violators of the assumption. To handle that, we observe that the algorithm can be modified to work even if the \emph{arboricity} of the induced subgraph $G[C]$ is $O(\sqrt{n}\log n)$ (instead of its maximum degree), which is actually the case. 

The \emph{worst-case} algorithm is obtained by deamortizing the algorithm above. As a first step, we avoid any additions into the chosen set $C$ during the phase. To that end, we classify the additions into $C$ in the amortized algorithm to two node-sets: the \emph{underloaded} nodes that have too few chosen neighbors, and the \emph{overloaded} nodes that have too many chosen neighbors. 

The issue with underloaded nodes is that when recoloring them, the success probability of the random color trials approach described above becomes too small. To overcome this obstacle, we define a designated recoloring procedure for these nodes that simply aims to steal a color from a non-underloaded neighbor (\`a la the color stealing mechanism implemented for unchosen nodes), and then recolor this neighbor (which is easier since it is not underloaded). 

For overloaded nodes, the issue is that when they change color, the algorithm cannot update the relevant data structures of their chosen neighbors (as there are too many of them). Instead, chosen nodes do not store the information on their overloaded neighbors --- they only collect it when being recolored. For both the underloaded and overloaded nodes, the fact that there are only $O(\sqrt{n})$ of them keeps the update time of all recolorings at the desired $\otilde(\sqrt{n})$. 

The final deamortization step is in disposing of the costly preprocessing stage in the beginning of each phase. The idea here is simple --- the preprocessing of each phase is implemented across the previous phase. Intuitively, since the preprocessing time is $\otilde(n^{3/2})$, its implementation can be distributed across the $n$ updates of a phase such that the $\otilde(\sqrt{n})$ update time is preserved. Moreover, this preprocessing module can also handle edge insertion/deletion by updating its relevant data structures accordingly. This is done in an analogous way to how the main module updates its data structures, thus maintaining the worst-case update time and correctness.

\subsection{Organization}\label{section:paper-organization}
Section \ref{section:preliminaries} introduces the notation. Section \ref{section:amortized-alg} presents a simple amortized algorithm capturing the main ideas, and Section \ref{section:worst-case-alg} deamortizes it to obtain Theorem \ref{theorem:worst-case}.

\paragraph{Remark.} The algorithms of Sections \ref{section:amortized-alg} and \ref{section:worst-case-alg} are presented under the assumption that $\Delta$ is known to the algorithm in advance. We emphasize that this assumption is made strictly for the sake of a clear exposition. The adjustments required to lift this assumption are presented in Section \ref{section:knowledge}.

    
\section{Preliminaries}\label{section:preliminaries}
\paragraph{Notations and definitions.}
Consider a graph $G=(V,E)$ with maximum degree $\Delta$. A function $\col:V\to[\Delta+1]$ is said to be a \emph{proper} $(\Delta+1)$-coloring if $\col(u)\neq \col(v)$ for every edge $(u,v)\in E$. Given a proper $(\Delta+1)$-coloring $\col$ and a subset $S\subseteq V$ of nodes, let $\col(S)=\{\col(v)\mid v\in S\}$ be the set of colors assigned by $\col$ to the nodes of $S$. In this context, the \emph{color histogram} of $S$ is a set of $\Delta+1$ pairs $(i,\mu_{i}(S))$ where $i\in [\Delta+1]$ is a color and $\mu_{i}(S)=|\{v\in S\mid \col(v)=i\}|$ is its number of occurrences in $S$.

As common, for a node $v\in V$ we denote by $N(v)$ the set of $v$'s neighbors and let $\deg(v)=|N(v)|$. The following notions are defined w.r.t.\ a proper $(\Delta+1)$-coloring $\col$. A color $i\in [\Delta+1]$ is said to be \emph{available} for a node $v\in V$ if $i\notin \col(N(v))$. For a subset $Q\subseteq N(v)$ of $v$'s neighbors, we say that a color $i\in [\Delta+1]$ is \emph{$Q$-good} for $v$ if $i\notin \col(N(v)-Q)$ (i.e., $\mu_i(N(v)-Q)=0$) and $i$ is assigned to at most one node in $Q$ (i.e., $\mu_i(Q) \leq 1$). Observe that by definition, if a color is available for $v$, then it is also $Q$-good for $v$ for any subset $Q\subseteq N(v)$. We note that the notion of $Q$-good colors is similar to that of \emph{blank/unique} colors from previous works on dynamic algorithms against an \emph{oblivious adversary} \cite{BCHN18,BGKL22,HP22}. In particular, those works make use of the following structural observation (which we prove for completeness). This observation will become useful for our algorithm as well.
\begin{observation}\label{observation:q-good}
    Let $\col:V\to[\Delta+1]$ be a proper $(\Delta+1)$-coloring of a graph $G=(V,E)$. Let $v\in V$ be an arbitrary node and let $Q\subseteq N(v)$ be some subset of $v$'s neighbors of size at least $2k$ for some integer $k\geq 1$. Then, at least $k+1$ colors are $Q$-good for $v$.
\end{observation}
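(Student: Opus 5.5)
We prove this by counting colors. Set $d=\deg(v)\le\Delta$ and $q=|Q|\ge 2k$. Split the $\Delta+1$ colors into two groups. The first group, $B$, holds every color that is \emph{bad} for $v$. A color is bad if it appears on some node of $N(v)-Q$, or if it appears on at least two nodes of $Q$. By definition, every color outside $B$ is $Q$-good for $v$. So it suffices to show $|B|\le \Delta-k$. Then at least $(\Delta+1)-(\Delta-k)=k+1$ colors are $Q$-good.

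To bound $|B|$, we charge each bad color to the nodes of $N(v)$ that carry it. A color appearing in $N(v)-Q$ can be charged to at least one node of $N(v)-Q$. There are $d-q$ such nodes, and they are disjoint from $Q$. A bad color that does not appear in $N(v)-Q$ must appear at least twice in $Q$. So it can be charged to at least two nodes of $Q$, and distinct colors are charged to disjoint sets of nodes. Let $b_1$ and $b_2$ be the sizes of these two kinds of bad colors. Then $b_1\le d-q$ and $2b_2\le q$. Here we use that $\col$ assigns one color per node, so the charged sets are disjoint.

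Combining these bounds gives
\[
|B| = b_1+b_2 \le (d-q) + \lfloor q/2\rfloor \le d - \lceil q/2\rceil \le \Delta - k,
\]
using $q\ge 2k$ and $d\le\Delta$. This yields at least $k+1$ $Q$-good colors, as required.

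The only delicate point is the bookkeeping in the middle paragraph. A color that appears both in $Q$ (possibly several times) and in $N(v)-Q$ must be counted once, not twice. Putting it into the first class and charging it only to a node of $N(v)-Q$ handles this. Note also that properness of $\col$ is not actually needed; only the fact that there are $\Delta+1$ colors and $\deg(v)\le\Delta$ is used.
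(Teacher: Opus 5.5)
Your proof is correct, and it takes a different route from the paper's.

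The paper splits on $d=|\col(Q)|$:
\begin{itemize}
\item If $d\le k$, the neighborhood uses at most $\Delta-|Q|+d\le\Delta-k$ colors, so $k+1$ colors are available.
\item Otherwise, it picks $Q'\subseteq Q$ consisting of $k$ nodes with distinct colors. Colors of $Q'$ that appear nowhere else in $N(v)$ are $Q$-good but unavailable. Every other node of $Q'$ repeats a color found outside $Q'$, and this frees one extra available color per such node, giving $|Q'|+1=k+1$ in total.
\end{itemize}

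You instead bound the complement directly. The non-$Q$-good colors number at most $|N(v)-Q|+\lfloor |Q|/2\rfloor$. This gives, in one line and with no case analysis, the explicit bound $\Delta+1-\deg(v)+\lceil |Q|/2\rceil$ on the number of $Q$-good colors.

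What your version buys is brevity and a slightly stronger, additive statement. Degree slack and the contribution of $Q$ are combined, so the separate low-degree cases in Observations~\ref{observation:good-colors} and~\ref{observation:underloaded} can be absorbed into the same inequality.

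What the paper's version buys is a more constructive picture. It exhibits the good colors as ``unique'' colors of $Q'$ plus available colors, mirroring the blank/unique dichotomy of the earlier oblivious-adversary work. Also, its second case only needs $|\col(Q)|\ge k$, not $|Q|\ge 2k$.

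Your remark that properness is never used also holds for the paper's proof, and it matters. The algorithm applies this observation to a node right after an insertion, or a color steal, has created a monochromatic edge at that node, so at that moment the coloring is not actually proper.
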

\begin{proof}
    Let $d=|\col(Q)|$. If $d\leq k$, then the total number of distinct colors given to $v$'s neighbors is at most $\Delta-|Q|+d\leq \Delta-2k+k=\Delta-k$ which means that there are at least $k+1$ colors that are available for $v$ (and hence, also $Q$-good for $v$). So, we are left to consider the case $d>k$. Let $Q'\subseteq Q$ be a set consisting of $k$ nodes in $Q$ such that $|\col(Q')|=k$, i.e., the nodes of set $Q'$ are assigned $k$ distinct colors. Let us partition $Q'$ into the set $Q'_{1}$ of nodes whose color is not assigned to any node in $N(v)-Q'$, and the set $Q'_{2}=Q'-Q'_{1}$. Note that by definition, the colors of nodes in $Q'_{1}$ are $Q$-good for $v$ and are not available for $v$. Moreover, since each color given to a node $u\in Q'_{2}$ is assigned also to a node in $N(v)-Q'$, the number of distinct colors assigned to $v$'s neighbors is at most $\Delta-|Q'_{2}|$. Hence, there are at least $|Q'_{2}|+1$ available colors for $v$. In total, the number of $Q$-good colors for $v$ is at least $|Q'_{1}|+|Q'_{2}|+1=|Q'|+1=k+1$.
\end{proof}

\paragraph{Concentration inequalities.}
We make use of the following inequality.

\begin{inequality}[Bernstein's inequality {{\cite[Theorem~2.8.4]{Vershynin2018}}}]\label{inequality:bernstein-standard}
Let $X_{1},\dots, X_{n}$ be independent zero-mean random variables and let $X=\sum_{i=1}^{n}X_{i}$. Suppose that $X_{i}\leq M$ for all $1\leq i\leq n$. Then, for all positive $t$,
\[\Pr[X\geq t]\leq \exp\left(-\frac{t^{2}/2}{\sum_{i=1}^{n}\mathbb{E}[X_{i}^{2}]+Mt/3} \right)\]
\end{inequality}

Notice that given a (not necessarily zero-mean) random variable $Y$, one can define a zero-mean random variable $X=Y-\mathbb{E}[Y]$. Based on this transformation, we get the following special case of Bernstein's inequality which would be more convenient for our application.
\begin{inequality}\label{inequality:bernstein-convenient}
Let $X_{1},\dots, X_{n}$ be independent random variables and let $X=\sum_{i=1}^{n}X_{i}$. Suppose that $X_{i}-\mathbb{E}[X_{i}]\leq M$ for all $1\leq i\leq n$. Then, for all positive $t$,
\[\Pr[X-\mathbb{E}[X]\geq t]\leq \exp\left(-\frac{t^{2}/2}{\sum_{i=1}^{n}\textup{Var}(X_{i})+Mt/3} \right)\]
\end{inequality}
	
\section{Simple Algorithm with $\otilde(\sqrt{n})$ Amortized Update Time}\label{section:amortized-alg}
In this section, we give a simple algorithm that achieves an amortized bound on the update time. That is, we constructively prove the following theorem. 
\begin{theorem}\label{theorem:amortized-alg-statement}
	There exists a fully dynamic algorithm that maintains a $(\Delta+1)$-coloring against an adaptive adversary in amortized update time $\otilde(\sqrt{n})$ whp. Furthermore, the algorithm's worst-case recourse is $2$.
\end{theorem}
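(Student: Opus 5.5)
We construct the amortized algorithm from the technical overview and analyze it phase by phase. Let $s=c\sqrt{n}\log n$ and assume $\Delta\ge s$; otherwise the trivial $O(\Delta)$ greedy recoloring suffices. Time is divided into phases of $n$ updates. At the start of a phase, every node joins the chosen set $C$ independently with probability $p=\min\{1,2s/\Delta\}$. By a Chernoff bound, whp every node of degree at least $\Delta/2$ then has between $s$ and $4s$ chosen neighbours. We also recolor nothing; we only rebuild the data structures:
\begin{itemize}
\item for every chosen $v\in C$, the color histogram of $N(v)$;
\item for every color $i$, the list $L_i$ of unchosen nodes colored $i$.
\end{itemize}
Maintaining the histograms requires that, when $u$ changes color, we touch the histograms of its chosen neighbours. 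For this we orient $G[C]$ with bounded out-degree and keep the structures for the in-neighbours and unchosen-neighbour part. Concretely, each chosen node $v$ stores counts only from its unchosen neighbours and its in-neighbours in $C$. When $v$ is recolored, it scans its $O(s)$ out-neighbours directly. Rebuilding costs $\tilde O(\Delta\cdot n\cdot p)+O(m)=\tilde O(n^{3/2})$, which we charge to the $n$ updates of the phase.

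During the phase, a node whose count of chosen neighbours leaves $[s/2,8s]$ while its degree is at least $\Delta/2$ is added to $C$; if the violation is of the overloaded type, the node's out-edges in $G[C]$ are kept bounded instead. Each update changes the chosen-degree of at most two nodes by one. Starting from the valid range, a node needs $\Omega(s)$ updates touching it before violating, so there are only $O(n/s)=O(\sqrt n)$ additions per phase. Each addition costs $O(\Delta)\le O(n)$, for $O(n^{3/2})$ total, again amortized to $O(\sqrt n)$ per update.

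We next bound the arboricity of $G[C]$. Initially, $G[C]$ has maximum degree $O(s)$ whp. The $O(\sqrt n)$ later additions each contribute a star, and each inserted edge adds one edge, so orienting every new edge away from the newer or lower-degree endpoint gives out-degree $O(s)$.

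Recoloring on an insertion of $(u,v)$ with $\col(u)=\col(v)$ is done by recoloring one endpoint $v$, chosen to be unchosen if possible.

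\textbf{Low-degree case.} If $\deg(v)<\Delta/2$, $v$ has at least $\Delta/2$ available colors. We try random colors; each trial is tested in $O(\sqrt n)$ time by checking $v$'s chosen neighbours via its histogram or out-edges, together with the short lists $L_i$. Expected $O(1)$ trials suffice, and $O(\log n)$ trials suffice whp.

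\textbf{Chosen case.} If $v\in C$, its histogram together with a scan of its $O(s)$ out-neighbours reveals an available color directly, in $\tilde O(\sqrt n)$ time. Finding a zero entry requires keeping the colors with count zero in a list.

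\textbf{Unchosen case.} If $v\notin C$, let $Q=N(v)\cap C$, so $|Q|\ge s/2$. By Observation~\ref{observation:q-good}, at least $s/4$ colors are $Q$-good for $v$. At most $\sqrt n/10$ colors are expensive, meaning $|L_i|>10\sqrt n$, so at least $s/8$ cheap $Q$-good colors remain. We sample colors uniformly from $[\Delta+1]$ and skip expensive ones. For a cheap color $i$, the test scans $L_i$ for neighbours of $v$ using adjacency hashing, and counts $v$'s chosen neighbours with color $i$ using those chosen nodes' knowledge. The second count is the subtle step. We handle it by having $v$ maintain, for each chosen neighbour, the color multiset restricted to $Q$, updated whenever a chosen node recolors. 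The cost of this is $O(|N(v)\cap C|)$ per recolor of a chosen node summed over its neighbours, which is $O(\Delta)$; we instead store at each unchosen $v$ a histogram of $\col(Q)$ and update it from the chosen node's side only when that node has $O(s)$ unchosen-side listeners. This bookkeeping is the part needing the most care.

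Once we find a $Q$-good color $i$, we give $i$ to $v$. If some $w\in Q$ has color $i$, we recolor $w$ via the chosen case. This gives recourse at most $2$.

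We now bound the running time of the unchosen case. The number of trials $T$ is geometric with success probability at least $s/(8(\Delta+1))$, so $T=O(\Delta\log n/s)$ whp. A single trial of color $i$ costs $X_i=O(1+|L_i|)\le O(\sqrt n)$. Moreover, $\mathbb E[X_i]\le O(1+n/\Delta)$, since $\sum_i|L_i|\le n$. Over $t=O(\Delta/\sqrt n)$ trials the expected total is $O(\Delta/\sqrt n+\sqrt n)$. The variance is at most $t\cdot\sqrt n\cdot n/\Delta=O(n)$ and the maximum term is $M=\sqrt n$. Inequality~\ref{inequality:bernstein-convenient} with deviation $\sqrt n\log n$ therefore gives a total of $O(\sqrt n\log n)$ whp.

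Correctness against an adaptive adversary follows because fresh randomness is used at every trial. The only other randomness is $C$; the adversary may learn $C$, but the guarantees about $C$ are re-enforced deterministically by the violation-driven additions, so the analysis never relies on $C$ being hidden. Deletions require only data-structure updates and no recoloring. Summing the per-update $\tilde O(\sqrt n)$ recoloring costs with the amortized phase rebuilds and additions yields the theorem.

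I expect the main obstacle to be the count of same-colored chosen neighbours in the $Q$-good test. This count must stay cheap while the adversary concentrates edges, which is exactly where the arboricity bound on $G[C]$ and the violation-driven additions have to be used in tandem.
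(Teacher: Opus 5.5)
Your overall architecture matches the paper's: a sampled chosen set, a low out-degree orientation of $G[C]$, histograms over $N(v)-N^{+}(v)$ for chosen $v$, lists $L_i$ with cheap and expensive colors, stealing a $Q$-good color, Bernstein for the trial cost, violation-driven additions, and amortized rebuilds. However, the step you single out as the main obstacle fails as written, and the fix is much simpler than what you attempt. Keeping a histogram of $\col(Q)$ at each unchosen node and updating it from the chosen side costs $\Theta(\Delta)$ per recoloring of a chosen node, because a chosen node may have $\Theta(\Delta)$ unchosen neighbours. Updating only ``when that node has $O(s)$ listeners'' leaves stale counts. You may then accept a color $i$ carried by two chosen neighbours of $v$, recolor only one of them, and leave a monochromatic edge. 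Arboricity plays no role in this step; only an upper bound on $|C_v|$ does. The paper keeps, for each unchosen $v$, the explicit list $C_v=C\cap N(v)$, updated on insertions, deletions, and when a neighbour joins $C$. At the start of each recoloring of $v$ it builds a dictionary of the colors of $C_v$ from scratch in $O(|C_v|)=O(s)$ time. Testing a cheap color $j$ is then one lookup plus a scan of $L_j$. The same list also repairs your low-degree case, where you test colors via ``a histogram or out-edges'' of an unchosen node that your design does not have.

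This needs $|C_v|=O(s)$ for \emph{every} unchosen $v$; the same bound also pays for updating the histograms of $v$'s chosen neighbours after $v$ changes color. Your overload trigger fires only when $\deg(v)\ge\Delta/2$, so a node of degree just below $\Delta/2$ can accumulate $\Theta(\Delta)$ chosen neighbours. The upper threshold must be degree-independent, as in the paper's Case~(I). Your count of additions also ignores that adding a node to $C$ raises the chosen-degree of all its neighbours without any update touching them, so additions can cascade. The paper prevents this by incrementing the threshold at every addition (it starts at $13s$, above the initial counts of less than $12s$), so crossing it forces more than $s$ insertions at that node. With your fixed window you would instead need an inductive argument: while fewer than $s$ additions have occurred, every added node is incident to $\Omega(s)$ updates, so only $O(n/s)\ll s$ additions ever happen. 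Two smaller points remain. First, the $O(s)$ out-degree of $G[C]$ must be proved, not asserted. The paper orients inserted edges toward the endpoint of larger out-degree; out-degree above $16s$ then requires $s$ out-edges inserted toward nodes of out-degree at least $15s$, each needing $s$ inserted out-edges of its own, i.e., $s^{2}>n$ insertions in a phase. Second, the $O(m)$ term in your rebuild can be $\Theta(n^{2})$. The rebuild must scan only the neighbourhoods of chosen nodes, which costs $O(|C|\Delta)=\tilde O(ns)$.
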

\noindent This algorithm constitutes the first step towards our main result --- an algorithm with a worst-case update time of $\otilde(\sqrt{n})$ and a recourse of $3$ --- presented in Section \ref{section:worst-case-alg}.

Similarly to the previous $(\Delta+1)$-coloring algorithms in the adaptive adversary setting \cite{BehnezhadRW25,FlinH25}, the algorithm operates in phases. Each phase begins with a preprocessing stage. Unlike \cite{BehnezhadRW25,FlinH25}, the preprocessing stage does not include a global recoloring step. Instead, the coloring from the previous phase is kept. The phase runs for $n$ dynamic updates. 

We assume throughout that $\Delta\geq 20\sqrt{n}\log n$ and set the parameter $s=\sqrt{n}\log n$.\footnote{In the case where $\Delta<20\sqrt{n}\log n$, we can simply resort to the trivial $O(\Delta)$-time algorithm that upon recoloring a node simply checks the colors of all its neighbors and selects an unused color.} As we will show, the time required for preprocessing as well as the total time required to maintain a valid $(\Delta+1)$-coloring during a phase is $\otilde(ns)$ whp, implying an amortized runtime of $\otilde(s)=\otilde(\sqrt{n})$ per update. For the reader's convenience, a glossary of the main data structures used during the execution of the algorithm is given in Table \ref{table:data-structures}.

\subsection{Preprocessing}\label{section:preprocessing}
In the preprocessing stage, the algorithm computes a set $C\subseteq V$ of \emph{chosen} nodes by sampling each node independently with probability $10s/\Delta$. Naturally, we say that a node $v\in V$ is \emph{unchosen} if $v\notin C$. For every unchosen node $v\in V-C$, the algorithm maintains the set $C_{v}=C\cap N(v)$ of chosen neighbors. For each color $i\in [\Delta+1]$, the algorithm builds a list $L_{i}$ of all \emph{unchosen} nodes that were assigned the color $i$. We say that a color $i$ is \emph{expensive} if $|L_{i}|\geq10\sqrt{n}$, and \emph{cheap} otherwise. Cheap colors are stored in a balanced tree data structure $T_{\text{chp}}$. Notice that a uniformly random color can be sampled from $T_{\text{chp}}$ in $O(\log \Delta)$ time.

The algorithm also maintains an orientation of the edges in the subgraph $G[C]$ throughout the phase so that the set $N^{+}(v)$ of $v$'s outgoing neighbors is stored for each chosen node $v\in C$.\footnote{For a node subset $V'\subseteq V$, we denote by $G[V']$ the subgraph of $G$ induced by $V'$.} In the preprocessing stage, the orientation is set arbitrarily. Let $N^{*}(v)= N(v)-N^{+}(v)$ for each chosen node $v\in C$. To complete the preprocessing stage, for each chosen node $v\in C$, the algorithm stores a color histogram of the colors assigned to the nodes in $N^{*}(v)$. This color histogram is stored in a minimum heap $H_{v}$. In the heap $H_{v}$, each element is associated with a color $i\in [\Delta+1]$ and its key is the number of nodes in $N^{*}(v)$ colored with $i$ (i.e., the minimum is taken w.r.t.\ the number of occurrences $\mu_{i}(N^{*}(v))$ of a color $i$ in $N^{*}(v)$). Notice that for each $v\in C$, $H_{v}$ is of size $\Delta+1$ and thus, supports operations such as increasing/decreasing a key in time $O(\log \Delta)$.\footnote{In the amortized case one can use, e.g., a Fibonacci heap to implement the data structures $H_{v}$ such that heap operations will be done in (amortized) $O(1)$ time. This will not matter much in the grand scheme as the algorithm's runtime anyway incurs $\text{polylog}$ factors in other modules. In addition, Fibonacci heaps are not suited for the algorithm with worst-case guarantee which is the main contribution of the paper.} 

\begin{lemma}\label{lemma:preprocessing}
	The runtime of the preprocessing stage is $\otilde(ns)$ whp. Furthermore, the following properties hold whp by the end of the stage: (1) each node has fewer than $12s$ chosen neighbors; (2) each node whose degree is larger than $\Delta-2s$ has more than $3s$ chosen neighbors.     
\end{lemma}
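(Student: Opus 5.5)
We treat the two probabilistic properties with Chernoff bounds, and then bound the runtime using property (1).

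\emph{Property (1).} Fix a node $v$ and let $X_v=|C\cap N(v)|$. This is a sum of $\deg(v)\le\Delta$ independent indicators, each with probability $p=10s/\Delta$, so $\mathbb{E}[X_v]\le 10s$. We apply the multiplicative Chernoff bound, dominating $X_v$ by a binomial with mean exactly $10s$ if needed. It gives $\Pr[X_v\ge 12s]=\Pr[X_v\ge(1+1/5)\cdot 10s]\le\exp(-(1/5)^2\cdot 10s/3)=\exp(-2s/15)$. Since $s=\sqrt n\log n$, this is $n^{-\omega(1)}$, and a union bound over all $n$ nodes gives property (1) whp.

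\emph{Property (2).} Now fix $v$ with $\deg(v)>\Delta-2s$. Since $\Delta\ge 20s$, we have $\deg(v)>\Delta-\Delta/10=\tfrac{9}{10}\Delta$, so $\mathbb{E}[X_v]>9s$. The lower-tail Chernoff bound gives $\Pr[X_v\le 3s]\le\Pr[X_v\le(1-2/3)\mathbb{E}[X_v]]\le\exp(-(2/3)^2\cdot 9s/2)=\exp(-2s)$. Another union bound finishes this property. Independence holds because nodes are sampled independently and $v$'s own membership is irrelevant to $N(v)$.

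\emph{Runtime.} Sampling $C$ takes $O(n)$ time. Building the sets $C_v$ for unchosen $v$ takes $O(\sum_v\deg(v))=O(n\Delta)$ time if we scan adjacency lists. That cost is too large when $\Delta\gg s$, so we instead iterate over each chosen node $u\in C$ and append $u$ to $C_w$ for each neighbor $w$. This costs $O(\sum_{u\in C}\deg(u))$. Whp $|C|=O(ns/\Delta)$ by Chernoff, since $\mathbb{E}|C|=10ns/\Delta\ge 10s$, which is large. Hence this step costs $O(ns)$. Alternatively, one can note that $\sum_{u\in C}\deg(u)=\sum_w |C\cap N(w)|<12sn$ by property (1), which avoids a separate bound on $|C|$. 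The lists $L_i$ are built in $O(n)$ time, and the tree $T_{\text{chp}}$ in $O(\Delta\log\Delta)$ time. Orienting $G[C]$ arbitrarily and building the heaps $H_v$ for $v\in C$ requires scanning $N(v)$ for each chosen $v$ and initializing a heap of size $\Delta+1$. This costs $O(\sum_{v\in C}(\deg(v)+\Delta)\log\Delta)=O(|C|\Delta\log\Delta)=\otilde(ns)$ whp, by the bound on $|C|$.

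\emph{Main obstacle.} The only step needing care is this last one. The $\Delta+1$ heap initialization per chosen node requires the separate concentration bound $|C|=O(ns/\Delta)$; the $\Delta+1$ term is not a neighborhood count, so the double-counting via property (1) does not cover it. The Chernoff bound on $|C|$ is routine, since its mean is at least $10s$.
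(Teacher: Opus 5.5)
Your proof is correct and follows essentially the same route as the paper: a Chernoff bound plus a union bound for Properties (1) and (2), and a runtime bound that rests on $|C|=O(ns/\Delta)$ whp. The only differences are cosmetic. For (2) you use $\Delta\ge 20s$ to get mean $>9s$, where the paper settles for $>5s$. Your runtime accounting is also a bit more explicit than the paper's: you build the $C_v$ by scanning from chosen nodes, and you add the double-counting bound $\sum_{u\in C}\deg(u)=\sum_w|C\cap N(w)|<12sn$ from Property (1).
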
     
\begin{proof}
	We start by showing that the properties hold. For Property (1), notice that the expected number of chosen neighbors for each node is at most $10s$. Therefore, by a Chernoff bound argument along with a union bound over the nodes, we get that the property holds with probability $1-\exp(-\Omega(s))$. For Property (2), observe that if a node has degree larger than $\Delta-2s$, then its expected number of chosen  neighbors is at least $(\Delta-2s)\cdot 10s/\Delta>(\Delta/2)\cdot 10s/\Delta=5s$. The argument now follows by a concentration bound similarly to Property (1).
	
	Regarding the runtime, the sampling of $C$ is done in $O(n)$ time. Notice that by Property (1), the maximum degree in $G[C]$ is $O(s)$ whp, which implies that the orientation of $G[C]$ can be computed and stored in time $O(ns)$. Computing the lists $L_{i}$ takes $O(n)$ time. The algorithm maintains the size of each $L_{i}$ which allows it to update $T_{\text{chp}}$ accordingly. Finally, building each heap $H_{v}$ of a chosen node $v\in C$ takes $O(\Delta)$. Since the size of $C$ is $O(ns/\Delta)$ whp, the total runtime of building the heaps $H_{v}$ is $O(ns)$. Building the sets $C_{v}$ of the unchosen nodes $v\in V-C$ also takes $O(ns)$. Overall, the runtime of the preprocessing stage is $\otilde(ns)$ whp.
\end{proof}
\begin{table}[]
	\centering
	\begin{tabular}{ |l|l|l| } 
		\hline
		\textbf{Notation} & \textbf{Data structure} & \textbf{Role} \\ 
		\hline
		$C$ & set & chosen nodes \\ 
		$C_{v}$& list & $C\cap N(v)$ for $v\in V-C$ \\
		$L_{i}$ & list & unchosen nodes colored $i$\\ 
		$T_{\text{chp}}$ & balanced tree & maintains cheap colors\\
		$N^{+}(v)$ & list & outgoing $G[C]$-neighbors of $v\in C$\\
		$N^{*}(v)$ & list & $N(v)-N^{+}(v)$\\
		$H_{v}$ & min-heap & maintains color histogram of $N^{*}(v)$ for $v\in C$\\
		\hline
	\end{tabular}
	\caption{Data structures used by the algorithm.}
	\label{table:data-structures}
\end{table}

In the upcoming subsection, we present the dynamic algorithm employed during the phase. Throughout, we condition on the success of Lemma \ref{lemma:preprocessing}. That is, we assume that the two properties listed in Lemma \ref{lemma:preprocessing} are satisfied after the preprocessing stage.

\subsection{Handling Dynamic Updates During a Phase}\label{section:phase} 
We describe the procedures required in order to maintain a $(\Delta+1)$-coloring under edge insertions/deletions. Apart from the recoloring procedures (Section \ref{section:recoloring}), we also define some structural procedures (Section \ref{section:structural}) that facilitate the desired update times. Particularly, during the phase the algorithm may select some unchosen nodes to be chosen, i.e., add them into the set $C$ (we never remove nodes from $C$ during a phase). Additionally, the algorithm maintains the edge orientation of the induced subgraph $G[C]$ such that throughout the phase $|N^{+}(v)|=O(s)$, and the minimum heap $H_{v}$ stores the color histogram of $N^{*}(v)=N(v)-N^{+}(v)$ for all $v\in C$.
\subsubsection{Recoloring Procedures}\label{section:recoloring}
The algorithm invokes a recoloring procedure only when an edge $(u,v)$ between two nodes of the same color is inserted (for any other update, no recoloring is performed). If one of the endpoints is a chosen node, it is recolored based on the procedure for recoloring chosen nodes given below (if both endpoints are chosen, one of them is selected arbitrarily for recoloring). Otherwise (both endpoints are unchosen), one endpoint is selected arbitrarily and a recoloring is performed based on the procedure for unchosen nodes. Let us denote by $\col$ the coloring right before the recoloring procedure is invoked. 
\paragraph{Recoloring chosen nodes.} Let $v\in C$ be a chosen node that undergoes a recoloring. The algorithm first updates the heap $H_{v}$ to store the color histogram of $N(v)$. Recall that before the recoloring procedure is invoked, $H_{v}$ stores the color histogram of $N^{*}(v)$. Thus, to store the color histogram of $N(v)$, the algorithm simply goes over each node $u\in N^{+}(v)$ and increases the key of $\col(u)$ in $H_{v}$ by $1$. Then, the algorithm sets $v$'s color to be the color at the root of the heap $H_{v}$ (i.e., a color with minimum key in $H_{v}$). Following that, $H_{v}$ is modified back to storing only the color histogram of $N^{*}(v)$. This is done by going over each node $u\in N^{+}(v)$ and decreasing the key of $\col(u)$ in $H_{v}$ by $1$. Finally, for each $u\in N^{+}(v)$, the algorithm updates $H_{u}$ by decreasing the key of $v$'s old color by $1$ and increasing the key of $v$'s new color by $1$.

\paragraph{Recoloring unchosen nodes.} Let $v\in V-C$ be an unchosen node that undergoes a recoloring. Let $i=\col(v)$ be $v$'s current color and recall that $C_{v}=C\cap N(v)$ is the set of $v$'s chosen neighbors. In the beginning of the procedure, the color histogram of $C_{v}$ is stored in a designated dictionary from which the number of times a color $i$ occurs in $C_{v}$ can be queried. Then, to recolor $v$, the algorithm samples cheap colors uniformly at random from $T_{\text{chp}}$ until finding a cheap $C_{v}$-good color for $v$. That is, the algorithm samples cheap colors uniformly until obtaining a color $j$ satisfying: (1) at most one chosen neighbor has color $j$; and (2) no unchosen neighbor belongs to $L_{j}$. Once such a color is found, it becomes $v$'s new color. 


When $v$ changes its color from $i$ to $j$, it is removed from $L_{i}$ and added to $L_{j}$. If necessary, $T_{\text{chp}}$ is modified accordingly. In addition, the heaps $H_{u}$ of $v$'s chosen neighbors $u\in C_{v}$ are updated by decreasing the key of $i$ and increasing the key of $j$ by $1$. Finally, if $v$ has a chosen neighbor $u\in C_{v}$ which is also colored $j$ (by design, there can be at most one such neighbor), then $u$ is recolored based on the procedure for recoloring chosen nodes presented above.  

\paragraph{Analysis.} We now analyze the recoloring procedures. This is done under the assumption that the following invariants hold throughout the phase. Recall that $s=\sqrt{n} \log n$.

\begin{enumerate}
	\begin{framed}
		\item[(1)] Every unchosen node of degree larger than $\Delta-s$ has at least $2s$ chosen neighbors.
		\item [(2)] Every unchosen node has at most $14s$ chosen neighbors.
		\item [(3)] $G[C]$ is oriented such that the out-degree of every node $v\in C$ is at most $16s$.
	\end{framed}
\end{enumerate}
\noindent Later, in Section \ref{section:structural}, we present the structural procedures that allow one to maintain these invariants.

We start the recoloring procedures' analysis from the following observation.
\begin{observation}\label{observation:good-colors}
	Let $v\in V-C$ be an unchosen node that undergoes a recoloring procedure. If Invariant (1) holds, then more than $s/2$ cheap colors are $C_{v}$-good for $v$.
\end{observation}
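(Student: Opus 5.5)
Two cases, by the degree of $v$, each reduced to counting $C_v$-good colors and then discarding expensive ones. The number of expensive colors is easy to bound: the lists $L_i$ partition the unchosen nodes, so $\sum_i |L_i|\le n$. Hence fewer than $n/(10\sqrt{n})=\sqrt{n}/10$ colors are expensive. Since $s=\sqrt{n}\log n\ge \sqrt{n}$, this is at most $s/10$ (for $n$ large enough that $\log n\ge 1$). It therefore suffices to show that at least $s/2+s/10+1$ colors, say at least $s$ colors, are $C_v$-good for $v$. Then more than $s/2$ of them are cheap.

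\emph{Case 1: $\deg(v)>\Delta-s$.} By Invariant (1), $|C_v|\ge 2s$. Apply Observation~\ref{observation:q-good} with $Q=C_v\subseteq N(v)$ and $k=\lfloor s\rfloor$ (or take $s$ integral). This gives at least $s+1$ colors that are $C_v$-good for $v$.

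\emph{Case 2: $\deg(v)\le \Delta-s$.} Here $N(v)$ uses at most $\Delta-s$ distinct colors, so at least $s+1$ colors in $[\Delta+1]$ are available for $v$. As the paper noted after defining $Q$-good colors, every available color is $C_v$-good for any $Q$, in particular for $Q=C_v$. This again gives at least $s+1$ such colors.

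In both cases, removing fewer than $s/10$ expensive colors leaves more than $s+1-s/10>s/2$ cheap $C_v$-good colors. One point needs care. The recoloring is triggered by inserting a monochromatic edge $(u,v)$, so at that moment the coloring is not proper and Observation~\ref{observation:q-good} assumes properness. I would handle this by treating $v$ as uncolored, which leaves the coloring of $G-\{v\}$ proper and has no effect on the colors appearing in $N(v)$ or on the counting argument in the proof of Observation~\ref{observation:q-good}. That proof uses only the fact that $N(v)$ has at most $\Delta$ nodes and counts the colors on them. Similarly, $v$'s own entry in $L_{\col(v)}$ affects the expensive count by at most one, which is harmless. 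This properness technicality, together with integrality of $s$, is the only real obstacle; the rest is direct counting.
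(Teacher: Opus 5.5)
Your proof is correct and follows the paper's argument. It uses the same case split: $\deg(v)\le\Delta-s$ gives at least $s+1$ available colors, while $\deg(v)>\Delta-s$ uses Invariant (1) together with Observation~\ref{observation:q-good} for $Q=C_v$; then you discard the at most $\sqrt{n}/10$ expensive colors (at most, not strictly fewer, which changes nothing). Your remarks on the momentarily improper coloring and on taking $k=\lfloor s\rfloor$ are valid refinements the paper leaves implicit, since the counting in Observation~\ref{observation:q-good} never uses properness.
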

\begin{proof}
	If $\deg(v)\leq \Delta-s$, then there are at least $s+1$ colors that are available for $v$ and thus are also $C_{v}$-good for $v$. Otherwise, by Invariant (1), we have $|C_{v}|\geq 2s$. In this case, it follows from Observation \ref{observation:q-good} that at least $s+1$ colors are $C_{v}$-good for $v$. Since the total size of all $L_{i}$ is at most $n$ and a color $i$ is expensive if $|L_{i}|\geq10\sqrt{n}$, there can be at most $\sqrt{n}/10<s/2$ expensive colors. Therefore, the number of $C_{v}$-good colors for $v$ that are cheap is more than $s/2$.  
\end{proof}

We can now bound the number of random color trials in the procedure for recoloring an unchosen node.
\begin{lemma}\label{lemma:number-of-trials-until-good}
	Let $v\in V-C$ be an unchosen node that undergoes a recoloring procedure. If Invariant (1) holds, then a cheap $C_{v}$-good color for $v$ is found within $O(\Delta\log n/s)$ random color trials whp.
\end{lemma}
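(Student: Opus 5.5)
Everything needed is already in Observation \ref{observation:good-colors}; all that remains is to turn it into a geometric tail bound on the number of trials.

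Fix the moment the recoloring procedure for $v$ starts. During the sampling loop the coloring $\col$, the lists $L_i$, and hence the set of cheap colors in $T_{\text{chp}}$ do not change. So the set $G_v$ of cheap $C_v$-good colors for $v$ is a fixed set throughout the loop. By Observation \ref{observation:good-colors}, which applies since Invariant (1) is assumed, $|G_v| > s/2$.

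Each trial samples a color uniformly from $T_{\text{chp}}$, which holds at most $\Delta+1$ colors. It is independent of previous trials because fresh random bits are used. Hence each trial lands in $G_v$ with probability at least $\frac{s/2}{\Delta+1} \ge \frac{s}{4\Delta}$.

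The adaptive adversary does not matter here. It may have chosen the current graph and coloring based on all past randomness, but the bound $|G_v|>s/2$ holds deterministically for \emph{every} configuration satisfying Invariant (1). The fresh samples of this procedure are independent of that configuration.

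Let $c$ be the target exponent and take $t = 4c\,\Delta\ln n/s = O(\Delta\log n/s)$ trials. The probability that all of them fail is at most
\[
\left(1-\frac{s}{4\Delta}\right)^{t} \le \exp\!\left(-\frac{s t}{4\Delta}\right) = n^{-c}.
\]
This is the desired whp bound. A union bound over polynomially many recoloring invocations extends it to the whole run if needed.

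The only subtle point is ensuring that the success probability does not degrade during the loop, i.e., that nothing is modified until a good color is found. This holds because $v$'s color, the lists $L_j$, and $T_{\text{chp}}$ are updated only after the loop terminates.
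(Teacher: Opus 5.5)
Your proof is correct and follows the paper's argument: Observation~\ref{observation:good-colors} gives each independent trial a success probability above $s/(2(\Delta+1))$, and a geometric tail bound over $O(\Delta\log n/s)$ trials gives failure probability $n^{-c}$. You also spell out two points the paper leaves implicit: the set of cheap $C_v$-good colors stays fixed during the sampling loop, and adaptivity is harmless because the bound holds for every configuration satisfying Invariant~(1).
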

\begin{proof}
	It follows from Observation \ref{observation:good-colors} that the probability of selecting a cheap $C_{v}$-good color by a random trial is larger than $s/(2(\Delta+1))$. Thus, the probability of not finding a good color within $2c(\Delta +1)\log n/s$ independent trials is at most $(1-s/2(\Delta+1))^{2c\log n(\Delta +1)/s}\leq e^{-c\log n}=n^{-c}$.
\end{proof}

We can now establish the correctness and runtime of each recoloring procedure.
\begin{lemma}\label{lemma:recoloring-time}
	If all invariants are satisfied, then both recoloring procedures yield a proper coloring and their runtime is $\otilde(\sqrt{n})$ whp.
\end{lemma}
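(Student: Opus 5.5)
We treat the two procedures separately, starting with chosen nodes since the unchosen procedure calls it at most once at the end.

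\textbf{Chosen nodes.} Let $v\in C$ be recolored. After the algorithm adds the colors of $N^{+}(v)$ into $H_{v}$, the heap stores the full histogram of $N(v)$. Since $|N(v)|\le\Delta<\Delta+1$, some color has key $0$, so the root of $H_{v}$ is an available color. Hence the new coloring is proper. For the running time, each of the three passes over $N^{+}(v)$ performs one heap operation per outgoing neighbor, in $O(\log\Delta)$ time each. By Invariant (3), $|N^{+}(v)|\le 16s$, so the total is $O(s\log\Delta)=\otilde(\sqrt n)$ deterministically. We must also check that the global data structures stay consistent. The heaps that contain $\col(v)$ are those $H_{u}$ with $v\in N^{*}(u)$, i.e.\ $u\in C$ and $v\in N^{+}(u)$, equivalently $u\in N^{+}(v)$ seen from the other side. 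These are exactly the heaps the procedure updates. Since $v$ is chosen, no list $L_i$ changes.

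\textbf{Unchosen nodes: correctness.} Let $v\in V-C$ be recolored to $j$. Condition (2) and the definition of $L_j$ ensure that no unchosen neighbor has color $j$. Condition (1) ensures that at most one chosen neighbor $u$ has color $j$. If such a $u$ exists, it is recolored by the chosen procedure, which by the previous paragraph picks a color available with respect to the updated coloring. Then $v$ and $u$ end up properly colored. No other conflict is created, since only $v$ and $u$ change color. The updates to $L_i$, $L_j$, $T_{\text{chp}}$ and the heaps $H_u$ for $u\in C_v$ keep every data structure faithful. Here we use that for a chosen $u$, $N^{*}(u)$ contains all of $u$'s unchosen neighbors.

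\textbf{Unchosen nodes: running time, the main obstacle.} Building the dictionary of $C_v$ and updating the heaps $H_u$ for $u\in C_v$ costs $O(s\log\Delta)$ by Invariant (2). The possible final chosen recoloring adds $\otilde(s)$. Each trial costs $O(\log\Delta)$ to sample, $O(1)$ to query the dictionary, and $O(|L_j|)$ to scan $L_j$ for neighbors of $v$, using adjacency lookups in a hash or balanced tree.

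By Lemma~\ref{lemma:number-of-trials-until-good}, the number of trials is $K=O(\Delta\log n/s)$ whp. However, the naive bound $K\cdot 10\sqrt n$ gives $O(\Delta)$, which is too weak. Instead we analyze the sum $X=\sum_{t\le K}|L_{j_t}|$, where the colors $j_t$ are independent and uniform over the cheap colors. Each term $X_t$ satisfies $X_t\le 10\sqrt n$, and $\mathbb E[X_t]\le n/|T_{\text{chp}}|\le n/(\Delta/2)=O(n/\Delta)$, because the total list size is at most $n$ and at least half the colors are cheap. Then $\mathrm{Var}(X_t)\le\mathbb E[X_t^2]\le 10\sqrt n\,\mathbb E[X_t]=O(n^{3/2}/\Delta)$.

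For a fixed budget of $K$ trials we have $\mathbb E[X]=O(K n/\Delta)=O(n\log n/s)=O(\sqrt n)$, and $\sum_t\mathrm{Var}(X_t)=O(K n^{3/2}/\Delta)=O(n)$. Apply Inequality~\ref{inequality:bernstein-convenient} with $M=10\sqrt n$ and $t=c\sqrt n\log n$. The exponent is $\Omega(t^2/(n+\sqrt n\,t))=\Omega(\log n)$, so $X=O(\sqrt n\log n)$ whp.

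One subtlety is that the stopping time depends on the samples. We handle it by conceptually drawing a fixed sequence of $K$ samples in advance and bounding the cost of the whole prefix, then taking a union bound with Lemma~\ref{lemma:number-of-trials-until-good}. Since the lists $L_j$ do not change during the trials, the $X_t$ are genuinely independent. Combining the bounds, the total running time is $\otilde(\sqrt n)$ whp.
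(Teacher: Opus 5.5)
Your proof is correct and follows the paper's argument essentially step for step: pigeonhole on the full histogram in $H_v$ for chosen nodes, correctness of the unchosen procedure by reduction to the chosen one, and Bernstein's inequality with $M=10\sqrt{n}$, $\sum_t \mathrm{Var}(X_t)\le M\cdot\mathbb{E}[X]=O(n)$ and $t=\Theta(\sqrt{n}\log n)$. Your fixed budget of $K$ pre-drawn samples handles the data-dependent stopping time a bit more carefully than the paper's conditioning on Lemma~\ref{lemma:number-of-trials-until-good}, and there is one harmless slip in your consistency check: for chosen $u$, $v\in N^{*}(u)$ means $v\notin N^{+}(u)$, i.e.\ $u\in N^{+}(v)$ (not $v\in N^{+}(u)$), which is the set you correctly conclude the procedure updates.
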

\begin{proof}
	We first address the procedure that recolors a chosen node $v\in C$. By Invariant (3), $v$ has $O(s)$ outgoing neighbors. Each heap update costs $O(\log \Delta)$, yielding $O(s\log \Delta)=\otilde(\sqrt{n})$ time. For correctness, first observe that throughout the phase $H_{v}$ stores the color histogram of $N^{*}(v)$. When the recoloring procedure is invoked, the algorithm collects in $H_{v}$ the color histogram of $N(v)$. Since there are at most $\deg(v)<\Delta+1$ distinct colors assigned to the nodes of $N(v)$, it follows from the pigeonhole principle that the color at the root of $H_{v}$ is available for $v$ (i.e., its key is $0$).

	We now consider the procedure that recolors an unchosen node $v\in V-C$. For correctness, recall that the procedure samples colors until finding a cheap $C_{v}$-good color for $v$. Then, if there is a node $u\in C_{v}$ with that color, it is recolored based on the recoloring procedure for chosen nodes. Thus, the correctness follows from the correctness of the recoloring procedure for chosen nodes.
	
	As for time complexity, let us first bound the runtime of the procedure \emph{after} a cheap $C_{v}$-good color for $v$ was found. The procedure first removes $v$ from a list $L_{i}$ and adds it to a list $L_{j}$ which takes $O(1)$ time. Additionally, the procedure may add (resp., remove) one color into (resp., from) $T_{\text{chp}}$ in $O(\log \Delta)$ time. Then, for every $u\in C_{v}$, the procedure performs $O(1)$ decrease/increase key operations to $H_{u}$. By Invariant (2), this takes a total of $O(s\log \Delta)=\otilde(\sqrt{n})$ time. Finally, the algorithm may perform a single recoloring of a chosen node $u\in C_{v}$. As established above, this also takes $\otilde(\sqrt{n})$ time. 
	
	It is now left to bound the time the procedure spends before finding a cheap $C_{v}$-good color for $v$. We analyze this bound conditioned on the success of Lemma \ref{lemma:number-of-trials-until-good}, i.e., we assume that the procedure performs $r=O(\Delta\log n/s)=O(\Delta/\sqrt{n})$ color trials before finding a cheap $C_{v}$-good color for $v$ (recall that this occurs whp). Consider the sequence $i_{1},\dots, i_{r}$ of colors chosen uniformly at random by the procedure and let $X_{j}=|L_{i_{j}}|$ for each $1\leq j\leq r$. Our goal is to bound $X=\sum_{j=1}^{r}X_{j}$.

    Since all sampled colors are cheap, we have $X_{j}< 10\sqrt{n}$ for each $1\leq j\leq r$, which means that the prerequisite for Bernstein's inequality (Inequality \ref{inequality:bernstein-convenient}) is satisfied for $M=10\sqrt{n}$. We now bound: (i) $\mathbb{E}[X]$; and (ii) $\sum_{j=1}^{r} \textup{Var}(X_{j})$. For (i), first notice that $\mathbb{E}[X_{j}]\leq n/(\Delta+1)$ for every $1\leq j\leq r$. This is because the average list size over all colors is $n/(\Delta+1)$, and this is an upper bound on the average list size over the cheap colors. So, by linearity of expectation, we get $\mathbb{E}[X]=\sum_{j=1}^{r}\mathbb{E}[X_{j}]\leq r\cdot n/(\Delta+1)=O((\Delta/\sqrt{n})\cdot n/\Delta)=O(\sqrt{n})$.
    
    As for (ii), we have $\sum_{j=1}^{r} \textup{Var}(X_{j})=\sum_{j=1}^{r}\mathbb{E}[X_{j}^{2}]-(\mathbb{E}[X_{j}])^{2}\leq \sum_{j=1}^{r} \mathbb{E}[X_{j}^{2}]$. Moreover, since $0\leq X_{j}\leq M$ for each cheap color $1\leq j\leq r$, it follows that
    \begin{align*}
    	\sum_{j=1}^{r} \textup{Var}(X_{j})&\leq \sum_{j=1}^{r} \mathbb{E}[X_{j}^{2}]
    	\leq \sum_{j=1}^{r} \mathbb{E}[M\cdot X_{j}]
    	\\&=M\cdot\sum_{j=1}^{r} \mathbb{E}[X_{j}]=
    	O(\sqrt{n})\cdot \mathbb{E}[X]=O(n).
    \end{align*}

	
	
	For any $t>0$, it follows from Bernstein's inequality that 
	\[\Pr[X>t+\mathbb{E}[X]]\leq \exp\left(-\frac{t^{2}/2}{\sum_{j=1}^{r}\textup{Var}(X_{j})+Mt/3}\right)\leq \exp\left(-\frac{t^{2}/2}{O(n)+O(\sqrt{n})t}\right).\]
	For a suitable choice of $t=\Theta(\sqrt{n}\log n)$, we get the desired bound of
	\[\Pr[X>2t]<\Pr[X>t+\mathbb{E}[X]]\leq \exp\left(-\Omega\left(\frac{(\sqrt{n}\log n)^{2}}{n+\sqrt{n}\cdot \sqrt{n}\log n}\right)\right)=\exp(-\Omega(\log n))=n^{-c}.\] 
	Overall, we get that $X\leq 2t=O(\sqrt{n}\log n)$ whp which concludes the runtime analysis. 
\end{proof}

\subsubsection{Structural Procedures}\label{section:structural}
We present the structural procedures that are implemented during the phase and help maintain the three Invariants listed in Section \ref{section:recoloring}. We start by describing a simple greedy rule for orienting inserted edges between two chosen nodes. Then, we describe scenarios in which the algorithm adds an unchosen node into the chosen set. 

\paragraph{Orienting inserted edges between chosen nodes.} When an edge is inserted between two chosen nodes as a dynamic update, it is oriented towards the endpoint with higher out-degree, breaking ties arbitrarily. If an inserted edge $(u,v)$ is oriented from $u$ to $v$, then $v$ is added to the set $N^{+}(u)$ and $H_{v}$ is updated by increasing the key of $u$'s color by $1$. 

\paragraph{Adding nodes to the chosen set.} Throughout the phase, the algorithm maintains a numerical variable $\mathtt{threshold}$ initially set to $13s$. An unchosen node $v\in V-C$ is added into the set $C$ in one of the following cases: (I) it has more than $\mathtt{threshold}$ chosen neighbors; or (II) it violates Invariant (1), i.e., its degree is larger than $\Delta-s$ and it has fewer than $2s$ chosen neighbors. Whenever a node is added into $C$, $\mathtt{threshold}$ is increased by $1$. If both endpoints of an updated edge $(u,v)$ are added into $C$, it is convenient to assume that they are added one after the other in some arbitrary order (and $\mathtt{threshold}$ is incremented on each addition). Notice that such an event can occur (only) if both endpoints of an inserted edge are unchosen before the insertion, and the insertion led both of them to violate Invariant (1). 

When a node $v$ is added into $C$, its edges to chosen neighbors are oriented outwards. A minimum heap $H_{v}$ is constructed to store the color histogram of $N^{*}(v)$ (which is simply the set of $v$'s unchosen neighbors at this point) and $v$ is added to $C_{u}$ for each unchosen neighbor $u$ of $v$. Finally, if $v$ is colored $i$ at this point, then it is deleted from $L_{i}$.

\paragraph{Analysis.} We now analyze the structural procedures. 
\begin{lemma}\label{lemma:number-of-choosings}
	The number of nodes that are added into $C$ during the phase is at most $2n/s$.
\end{lemma}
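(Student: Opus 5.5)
The plan is a charging argument. I will show that every node added to $C$ during the phase must have been an endpoint of at least $s$ of the phase's dynamic updates. Since the phase has $n$ updates and each update has two endpoints, the total number of (node, incident update) incidences is at most $2n$, so at most $2n/s$ nodes can each absorb $s$ of them. Throughout I condition on the two properties of Lemma \ref{lemma:preprocessing} holding at the start of the phase. I also use that nodes are never removed from $C$ during a phase. I treat the two addition rules separately.

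\emph{Case (II).} Suppose $v$ is added because $\deg(v)>\Delta-s$ and $|C_v|<2s$. If at the start of the phase $\deg(v)\le \Delta-2s$, then more than $s$ edges incident to $v$ were inserted since, which gives $s$ incident updates. Otherwise $\deg(v)>\Delta-2s$ at the start, so by Property (2) of Lemma \ref{lemma:preprocessing} $v$ had more than $3s$ chosen neighbors then. The chosen set only grows, so $v$'s chosen-neighbor count can decrease only through deletions of edges incident to $v$. Dropping from above $3s$ to below $2s$ therefore requires more than $s$ such deletions.

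\emph{Case (I).} This is the subtle step, because $v$'s chosen-neighbor count can also grow when neighbors join $C$, and such joins are not themselves updates incident to $v$. The role of $\mathtt{threshold}$ is exactly to absorb this. Let $k$ be the number of nodes added to $C$ before $v$ in the phase, so that at $v$'s addition $\mathtt{threshold}=13s+k$. The chosen-neighbor count of $v$ at that moment is at most
\[
(\text{initial count}) + (\text{insertions incident to } v) + k .
\]
By Property (1) of Lemma \ref{lemma:preprocessing} the initial count is less than $12s$. The additive $k$ accounts for neighbors that joined $C$, since at most $k$ nodes joined before $v$. Exceeding $13s+k$ thus forces more than $s$ insertions incident to $v$. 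The convention that simultaneous additions of both endpoints are processed sequentially, with $\mathtt{threshold}$ incremented each time, keeps this count of $k$ exact.

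\emph{Combining.} Each added node is charged at least $s$ incidences with updates of the current phase. Each update is charged by at most its two endpoints, and each node is added at most once. Summing, $s\cdot|\{\text{added nodes}\}|\le 2n$, which is the claimed bound. The main care point is Case (I), namely making sure the increments of $\mathtt{threshold}$ dominate the growth of $v$'s chosen-neighbor count coming from other nodes' additions.
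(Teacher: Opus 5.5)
Your proof is correct and follows the paper's argument: the paper also calls a node \emph{erratic} if it is incident to at least $s$ updates of the phase, bounds the number of erratic nodes by $2n/s$, and shows every added node is erratic. It does so with the same two-case analysis, using Properties (1) and (2) of Lemma~\ref{lemma:preprocessing}. Your Case (I) accounting, which bounds the chosen-neighbor count by the initial count plus incident insertions plus $k$ and compares it to $\mathtt{threshold}=13s+k$, spells out more carefully the paper's brief remark that incrementing $\mathtt{threshold}$ absorbs neighbors joining $C$.
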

\begin{proof}
	We say that a node $v$ is \emph{erratic} if at least $s$ dynamic updates occurred on edges incident on $v$ during the phase. Since each update is incident on $2$ nodes and there are $n$ updates during the phase, it follows that there are at most $2n/s$ erratic nodes in a phase. We complete the proof by showing that every node added to $C$ must be erratic.
	
	First, let us consider a node added into $C$ due to Case (I). Notice that by Property (1) of Lemma \ref{lemma:preprocessing}, in the beginning of the phase each unchosen node has fewer than $12s$ chosen neighbors. Moreover, whenever a node is added to $C$, $\mathtt{threshold}$ is incremented. This means that if a node $v$ surpasses $\mathtt{threshold}$ chosen neighbors during the phase, more than $s$ edges between $v$ and chosen nodes were inserted during the phase. By definition, this means that $v$ is erratic.
	
	Consider now a node $v$ added into $C$ due to Case (II). If $v$ has degree at most $\Delta-2s$ in the beginning of the phase, then to satisfy the conditions of Case (II) it must have its degree increase by more than $s$, making it erratic. Suppose now that $v$'s degree in the beginning of the phase is larger than $\Delta-2s$. By Property (2) of Lemma \ref{lemma:preprocessing}, this means that $v$ has more than $3s$ chosen neighbors in the beginning of the phase. Thus, to satisfy the conditions of Case (II) it must have more than $s$ edges to chosen neighbors deleted, making it erratic.
\end{proof}

In the following two lemmas we prove the maintenance of Invariants (1) and (2) as well as bound the total time associated with adding nodes to $C$. 
\begin{lemma}\label{lemma:invariants-1+2}
	Invariants (1) and (2) are satisfied throughout the phase.
\end{lemma}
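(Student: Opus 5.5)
The plan is to argue that the structural procedure for adding nodes to $C$ acts as an immediate repair rule. Whenever an update could break Invariant (1) or (2) for some unchosen node, that node is promoted before the next recoloring step. We will check that this repair is always triggered in time, by induction over the sequence of updates (and node additions) in the phase.

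\textbf{Base case.} At the start of the phase, Property (1) of Lemma \ref{lemma:preprocessing} gives every node fewer than $12s \le 14s$ chosen neighbors, so Invariant (2) holds. Property (2) gives every node of degree larger than $\Delta-2s$ more than $3s \ge 2s$ chosen neighbors. In particular this covers every node of degree larger than $\Delta-s$, so Invariant (1) holds.

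\textbf{Invariant (1) during the phase.} An unchosen node $v$ can start violating it only when its degree rises above $\Delta-s$ or its count $|C_v|$ drops below $2s$. Chosen nodes are never removed, so $|C_v|$ decreases only through deletions of edges to chosen neighbors. Any such violation is exactly Case (II), so $v$ is moved into $C$ immediately and is no longer subject to the invariant. When both endpoints of an inserted edge violate it, they are added one after the other, which is still fine.

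\textbf{Invariant (2) during the phase.} Here the key bound is $\mathtt{threshold} \le 14s$ throughout. It starts at $13s$ and increases by one per addition to $C$. By Lemma \ref{lemma:number-of-choosings} there are at most $2n/s$ additions, so
\[
\mathtt{threshold} \le 13s + 2n/s = 13s + 2\sqrt{n}/\log n \le 14s .
\]
An unchosen node's count $|C_v|$ grows only by one at a time: either through an insertion of an edge to a chosen node, or through a neighbor joining $C$. Case (I) fires as soon as $|C_v|$ exceeds $\mathtt{threshold}$. Hence every unchosen node keeps $|C_v| \le \mathtt{threshold} \le 14s$.

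\textbf{Main obstacle: cascades.} Adding $v$ to $C$ raises $|C_u|$ for all of $v$'s unchosen neighbors $u$, which may push some $u$ over the threshold. The threshold increment then offsets exactly this. We handle it by processing additions one at a time and re-checking Case (I) after each, so that the process terminates because additions are bounded by Lemma \ref{lemma:number-of-choosings}. We must also confirm that Lemma \ref{lemma:number-of-choosings}'s erratic argument already accounts for cascade additions. It does: the threshold increment guarantees that surpassing it requires more than $s$ inserted edges to chosen nodes, beyond those coming from added neighbors.
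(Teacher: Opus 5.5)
Your proof is correct and follows the paper's argument. Invariant (1) holds because any violator is immediately added to $C$ via Case (II). Invariant (2) follows from $\mathtt{threshold} \le 13s + 2n/s \le 14s$, which uses Lemma \ref{lemma:number-of-choosings}, together with the fact that Case (I) keeps every unchosen node at or below $\mathtt{threshold}$. Your base case and cascade discussion add detail the paper leaves implicit. In fact no cascade can occur at all: each addition raises both $|C_u|$ and $\mathtt{threshold}$ by one and leaves degrees unchanged, so it never by itself creates a new violation at a neighbor.
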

\begin{proof}
	Invariant (1) clearly holds since whenever a node violates the invariant, it is added into $C$ (and thus stops violating the invariant). For Invariant (2), it follows from Lemma \ref{lemma:number-of-choosings} that during the phase $\mathtt{threshold}$ increases by at most $2n/s=2\sqrt{n}/\log n< s$. Since $\mathtt{threshold}$ is initialized to $13s$, we get that throughout the phase $\mathtt{threshold}<14s$. The invariant now follows because by construction, each unchosen node has at most $\mathtt{threshold}$ chosen neighbors. 
\end{proof}

\begin{lemma}\label{lemma:runtime-of-choosings}
	The total runtime associated with adding nodes into $C$ during the phase is $O(n\Delta /s)=O(n^{3/2})$.
\end{lemma}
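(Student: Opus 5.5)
The proof multiplies a per-addition cost bound by the bound on the number of additions. By Lemma \ref{lemma:number-of-choosings}, at most $2n/s$ nodes are added into $C$ during the phase. So it suffices to show that a single addition of a node $v$ into $C$ costs $O(\Delta)$ time, up to the $O(\log\Delta)$ factors of individual heap/tree operations, which get absorbed into $\otilde$ notation. Then the total is $O((n/s)\cdot\Delta)$. Since $\Delta\le n$ and $s=\sqrt{n}\log n$, this is $O(n^{3/2})$.

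To bound the cost of one addition, I would go through the steps of the structural procedure in order.
\begin{enumerate}
\item Orienting all edges from $v$ to its chosen neighbors outwards means scanning $N(v)$ once and building $N^{+}(v)$, which takes $O(\deg(v))=O(\Delta)$ time.
\item Building the heap $H_v$ for the histogram of $N^{*}(v)$ amounts to initializing $\Delta+1$ keys. This is done by counting the colors of $v$'s unchosen neighbors in an array and then heapifying, in $O(\Delta)$ time.
\item Adding $v$ to $C_u$ for each unchosen neighbor $u$ costs $O(1)$ per neighbor.
\item Removing $v$ from $L_i$ costs $O(1)$, and possibly updating $T_{\text{chp}}$ costs $O(\log\Delta)$.
\end{enumerate}

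One point needs care. After the addition, the chosen neighbors $w$ of $v$ gain an incoming edge from $v$ rather than an outgoing one, so $v$ now lies in $N^{*}(w)$. Their heaps $H_w$ must therefore reflect $v$'s color. However, $v$ was previously unchosen, so before the addition it was already counted in $N(w)-N^{+}(w)=N^{*}(w)$ and $H_w$ already contained it. Thus no update to $H_w$ is required.

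The step I expect to be the main obstacle is the cascade of additions. An addition increments $\mathtt{threshold}$, and it also raises the chosen-neighbor counts of $v$'s unchosen neighbors, possibly triggering Case (I) for them. The plan is to charge every triggered addition to its own $O(\Delta)$ cost, which is legitimate because Lemma \ref{lemma:number-of-choosings} bounds all additions in the phase regardless of their cause. Detecting violations also costs only $O(1)$ per updated counter, namely the $|C_u|$ values maintained while adding $v$ to the sets $C_u$. Summing over all at most $2n/s$ additions then gives the claimed bound of $O(n\Delta/s)$.
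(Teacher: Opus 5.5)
Your proposal is correct and takes the same approach as the paper: $O(\Delta)$ work per addition, dominated by building $H_v$, multiplied by the at most $2n/s$ additions from Lemma~\ref{lemma:number-of-choosings}. Your step-by-step accounting fills in details the paper leaves implicit, such as the fact that the heaps $H_w$ of chosen neighbors need no update. Your hedge about $\log\Delta$ factors is unnecessary: heapify is linear and the single $T_{\text{chp}}$ update adds only $O(\log\Delta)$, so the bound holds with plain $O(\cdot)$ as stated.
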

\begin{proof}
	Each addition into $C$ requires building one heap in $O(\Delta)$ time. By Lemma \ref{lemma:number-of-choosings} there are $O(n/s)$ additions, for a total of $O(n\Delta/s)$.
\end{proof}

We complete the analysis by establishing the maintenance of Invariant (3).
\begin{lemma}\label{lemma:invariant-3}
	Invariant (3) is satisfied throughout the phase.
\end{lemma}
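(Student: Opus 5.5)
The plan is to combine the initial degree bounds with the erratic-node counting from Lemma \ref{lemma:number-of-choosings}, and to use the greedy orientation rule only for the few erratic nodes. Throughout I read the rule ``oriented towards the endpoint with higher out-degree'' as saying that when an edge $(u,v)$ is oriented from $u$ to $v$, then $|N^{+}(u)|\le |N^{+}(v)|$ holds just before the insertion. I also take ``erratic'' exactly as in the proof of Lemma \ref{lemma:number-of-choosings}, so at most $2n/s=2\sqrt{n}/\log n<s$ nodes are erratic in the phase.

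\emph{Step 1 (how out-degrees can grow).} I would list the events that change out-degrees. In the preprocessing stage, the arbitrary orientation of $G[C]$ gives every node out-degree less than $12s$. This is because, by Property (1) of Lemma \ref{lemma:preprocessing}, every node has fewer than $12s$ chosen neighbors. When a node $v$ is added into $C$, all of its edges to chosen neighbors become outgoing. Since $v$ was unchosen just before, it had at most $\mathtt{threshold}<14s$ chosen neighbors (see the proof of Lemma \ref{lemma:invariants-1+2}), so $v$ starts with out-degree below $14s$. This addition does not increase the out-degree of any other node. Deletions only decrease out-degrees. Hence the only event that increases the out-degree of a node $u$ already in $C$ is the insertion of an edge incident on $u$ that is oriented out of $u$.

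\emph{Step 2 (non-erratic nodes).} A non-erratic node receives fewer than $s$ insertions incident on it during the phase. By Step 1 its out-degree therefore stays below $14s+s=15s$ at all times.

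\emph{Step 3 (erratic nodes; the main step).} Suppose that at some time $t$ a node $u$ has out-degree exceeding $16s$. Let $t_0<t$ be the last time before $t$ at which $|N^{+}(u)|\le 15s$. Every increment of $u$'s out-degree in $(t_0,t]$ comes from an inserted edge $(u,v)$ oriented out of $u$ while $|N^{+}(u)|\ge 15s$. By the greedy rule, the head $v$ then also has out-degree at least $15s$, so by Step 2 the node $v$ is erratic. Consider the out-edges of $u$ present at time $t$. Those that already existed at $t_0$ number at most $15s$. Every other one was inserted in $(t_0,t]$ and points to an erratic node. Because the graph is simple, these heads are distinct nodes, so there are fewer than $s$ of them. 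Thus $|N^{+}(u)|<16s$ at time $t$, which is a contradiction.

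The delicate point is in Step 3. One must count the out-edges \emph{present} at time $t$ rather than all insertions, because an edge to an erratic node may be deleted and reinserted many times. It is simplicity of the graph that caps the number of such edges by the number of erratic nodes. I also need to confirm that the tie-breaking in the rule permits $|N^{+}(u)|=|N^{+}(v)|$, which is why Step 3 uses the non-strict inequality $\ge 15s$ for the head and still concludes that the head is erratic.
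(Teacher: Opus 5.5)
Your proof is correct and follows essentially the paper's argument: both show that reaching out-degree $15s$ requires at least $s$ in-phase insertions, and then use the greedy rule to force more than $s$ out-edges into such high-out-degree heads. The only difference is the final count. The paper multiplies out to $s^2>n$ insertions, while you bound the distinct heads by the number of erratic nodes, $2n/s<s$. Your explicit $t_0$ window and your treatment of deleted-and-reinserted edges make precise a step the paper leaves implicit.
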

\begin{proof}
	We start by observing that if during the phase a chosen node $v$ has out-degree at least $15s$, then it must mean that at least $s$ outgoing edges of $v$ are edges that were inserted during the phase. Indeed, if $v$ is chosen in the beginning of the phase, then by Property (1) of Lemma \ref{lemma:preprocessing} its (out-)degree in $G[C]$ is smaller than $12s$; and if $v$ is added to $C$ during the phase, then upon addition into $C$, its out-degree in $G[C]$ is at most $\mathtt{threshold}$ which is smaller than $14s$. Notice that by construction, once $v$ is in $C$, every new outgoing edge of $v$ must be an edge that was inserted during the phase.  
	
	Now, assume by contradiction that at some point during the phase, there exists a node $v\in C$ with out-degree larger than $16s$. By the way the orientation is constructed, this must mean that $v$ has at least $s$ outgoing edges to nodes whose out-degree at the time of insertion is at least $15s$. By the observation above, each of these nodes is incident on at least $s$ outgoing edges that were inserted during the phase. In particular, this means that there have been at least $s^{2}>n$ edge insertions during the phase which contradicts the phase length. 
\end{proof}

	

\section{Algorithm with $\otilde(\sqrt{n})$ Worst-Case Update Time}\label{section:worst-case-alg}
In this section, we present an algorithm that achieves a worst-case update time of $\otilde(\sqrt{n})$ as stated in Theorem \ref{theorem:worst-case}. The algorithm is constructed by deamortizing the algorithm presented in Section \ref{section:amortized-alg}. To that end, we identify the sources of amortization in the algorithm and describe the techniques used to deamortize them. Namely, we show how to deamortize: (1) the additions of nodes into the chosen set $C$ during the phase; and (2) the preprocessing stage. 

The deamortization of (1) is presented in Section \ref{section:deamortizing-choosings}. The idea is to first classify the nodes added into $C$ during a phase of the algorithm of Section \ref{section:amortized-alg}. These are the \emph{overloaded} nodes, i.e., nodes with ``too many'' chosen neighbors; and the \emph{underloaded} nodes, i.e., nodes with ``too few'' chosen neighbors. As we show, by slightly modifying our recoloring procedures, we can overcome the problems posed by each of these node types without adding them into the chosen set $C$. The modified recoloring procedures maintain the same asymptotic runtime guarantees as the original procedures, leading to the desired deamortization result. 

The deamortization of (2) is presented in Section \ref{section:deamortizing-sampling-chosen}. The idea is that we can pipeline the preprocessing of phase $i+1$ across phase $i$. That is, instead of a static preprocessing algorithm in the beginning of each phase, we show how its execution can be distributed along the previous phase while maintaining the same guarantees as the static preprocessing. The analysis of the resulting algorithm is given in Section \ref{section:worst-case-analysis}. Finally, in Section \ref{section:knowledge} we describe the adjustments required to implement the algorithm without the foreknowledge of $\Delta$.

\subsection{Deamortizing Additions Into the Chosen Set}\label{section:deamortizing-choosings}

Recall that the addition of a node $v$ into $C$ can occur in one of two cases (see Section \ref{section:structural}): (I) $v$ has more than $\mathtt{threshold}$ chosen neighbors; or (II) $v$'s degree is larger than $\Delta-s$ and it has fewer than $2s$ chosen neighbors. To get a worst-case bound, we cannot afford to have these additions into $C$ as each addition may take $\Theta(\Delta)$ time.

We present a modified algorithm that does not add unchosen nodes into $C$ during the phase. Towards that goal, we define the following notions. We say that an unchosen node is \emph{overloaded} if at some point during the phase it has more than $13s$ chosen neighbors.\footnote{Notice that here, the threshold remains fixed. In contrast, in the definition of Section \ref{section:structural} the threshold increases after each addition into $C$. The reason is that in Section \ref{section:structural}, if the threshold remained static it could have caused a cascading effect leading to many nodes to surpass the threshold (due to the addition of their neighbors into $C$). This is not a danger in the current section since we do not add nodes into $C$.} An unchosen node is referred to as \emph{underloaded} if at some point during the phase it satisfies Case (II). Throughout the phase, the algorithm maintains a list $\ell_{\textup{ovr}}$ of the overloaded nodes, and a balanced tree $T_{\text{und}}$ of the underloaded nodes. It is not hard to see that maintaining $\ell_{\textup{ovr}}$ and $T_{\textup{und}}$ can be done well within the desired update time.

It is important to note that while the algorithm of Section \ref{section:amortized-alg} handles overloaded and underloaded nodes in the same manner (by adding them into $C$), they are problematic for very different reasons. In the case of an overloaded node, the problem is that changing its color requires updating the heaps $H_{u}$ of all chosen neighbors which becomes too costly; whereas in the case of an underloaded node, the problem is that finding a cheap $C_{v}$-good color may take too long as it may not satisfy the conditions of Observation \ref{observation:good-colors}. We now explain how to handle each type of nodes.

\subsubsection{Handling Overloaded Nodes}\label{section:overloaded}
 Once a node $u\in V$ becomes overloaded, the algorithm stops taking its color into account in the heaps $H_{v}$ of its chosen neighbors $v\in C_{u}$. In other words, for each chosen node $v\in C$, the algorithm maintains in $H_{v}$ the color histogram only of the nodes in $N^{*}(v)$ that are not overloaded.    

Recall that when a chosen node undergoes a recoloring, it collects the color histogram of $N(v)$ into $H_{v}$. In the algorithm of Section \ref{section:amortized-alg}, this was accomplished simply by going over the nodes in $N^{+}(v)$. In the current modified algorithm, $v$ has to also take into account its overloaded neighbors. This is done by simply going over the entire list $\ell_{\textup{ovr}}$ and adding the colors of $v$'s overloaded neighbors into the color histogram of $N(v)$.

One final detail that we need to take care of is in the implementation of a recoloring for an overloaded node $v$. In particular, when $v$ gets recolored, the algorithm looks for a $C_{v}$-good color. In the standard recoloring procedure of Section \ref{section:recoloring}, this is implemented by first collecting the color histogram of $C_{v}$. In our case, collecting this color histogram from scratch becomes too costly, and so we avoid this step as follows. At the moment $v$ becomes overloaded, the algorithm creates a designated data structure for $v$ and collects the color histogram of $C_{v}$ into it. Notice that this only takes $\otilde(s)$ time and thus, does not violate the runtime guarantee. Afterwards, whenever $v$ has a chosen neighbor added (resp., removed) its color is added to (resp., removed from) the color histogram of $C_{v}$. In addition, when a chosen node $v\in C$ is recolored, the algorithm updates the color change in the color histograms of all of $v$'s overloaded neighbors (which are found by going over $\ell_{\textup{ovr}}$).

The following observation is straightforward from the construction.
\begin{observation}\label{observation:overloaded}
	The runtime of the modified procedure for recoloring chosen nodes is $\otilde(\sqrt{n}+|\ell_{\textup{ovr}}|)$. 
\end{observation}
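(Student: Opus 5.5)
The plan is to walk through the modified procedure for recoloring a chosen node $v\in C$ step by step and charge each step either to $N^{+}(v)$ or to a scan of $\ell_{\textup{ovr}}$. We invoke Invariant (3) for $|N^{+}(v)|\leq 16s$. We take as given that every heap operation costs $O(\log\Delta)$, and that every dictionary operation on the color histograms of $C_u$ for overloaded nodes $u$ costs $O(\log\Delta)$, say by implementing these histograms as balanced trees.

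The steps are the following.
\begin{enumerate}
\item Build the full color histogram of $N(v)$ inside $H_v$. Recall that $H_v$ stores the histogram of the non-overloaded nodes of $N^{*}(v)$. Two kinds of neighbors are missing: the out-neighbors in $N^{+}(v)$, and the overloaded neighbors in $N^{*}(v)$.
\begin{itemize}
\item The out-neighbors are added exactly as in Section~\ref{section:recoloring}, at cost $O(s\log\Delta)$.
\item The overloaded neighbors are found by scanning $\ell_{\textup{ovr}}$. For each $u\in\ell_{\textup{ovr}}$ we test whether $u\in N(v)$ in $O(1)$ (or $O(\log n)$) time using an adjacency dictionary. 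We must not double count an overloaded $u$ that already lies in $N^{+}(v)$, which is detectable in the same way. Each hit costs one key increase, so this scan costs $O(|\ell_{\textup{ovr}}|\log\Delta)$.
\end{itemize}
\item Take the root of $H_v$ as the new color of $v$. Correctness follows from the pigeonhole argument of Lemma~\ref{lemma:recoloring-time}, although the statement only concerns runtime.
\item Undo the additions of step 1 by the same two passes, at the same cost.
\item Propagate the color change.
\begin{itemize}
\item For each $u\in N^{+}(v)$ that is not overloaded, update $H_u$; this is $O(s\log\Delta)$.
\item Scan $\ell_{\textup{ovr}}$ once more and, for each overloaded neighbor $u$ of $v$, update the color histogram of $C_u$; this is $O(|\ell_{\textup{ovr}}|\log\Delta)$.
\end{itemize}
\end{enumerate}
Summing these costs gives $O((s+|\ell_{\textup{ovr}}|)\log\Delta)=\otilde(\sqrt{n}+|\ell_{\textup{ovr}}|)$.

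The only real care point is in step 4. Since $v$ is chosen, its in-neighbors in $G[C]$ are chosen nodes whose heaps $H_u$ contain $v$ in $N^{*}(u)$ only if $v\in N^{+}(u)$ is false. The structures that must be touched are therefore exactly the heaps $H_u$ with $v\in N^{*}(u)$, i.e.\ those with $u\in N^{+}(v)$, together with the histograms of the overloaded neighbors $u$ of $v$. Any other neighbor $u$ of $v$ is either unchosen and not overloaded, in which case it stores no color information about $v$ beyond membership in $C_u$, or chosen with $v\in N^{+}(u)$. So no other structure needs updating, and the bound above is complete.
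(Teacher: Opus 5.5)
Your proposal is correct and is essentially the paper's argument, which the paper leaves as ``straightforward from the construction.'' That argument is $O(s)$ heap operations over $N^{+}(v)$, bounded by Invariant (3) or, once preprocessing is deamortized, by Lemma~\ref{lemma:bts}(i), plus a constant number of scans of $\ell_{\textup{ovr}}$ with an adjacency test and an $O(\log\Delta)$-time histogram update per hit; your double-counting and ``not overloaded'' checks are vacuous (overloaded nodes are unchosen while $N^{+}(v)\subseteq C$) but harmless.
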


\subsubsection{Handling Underloaded Nodes}\label{section:underloaded}
Since the underloaded nodes might be hard to recolor based on the existing procedure for recoloring unchosen nodes, we describe a modified recoloring procedure specifically for underloaded nodes. That is, the modified algorithm has two separate procedures for unchosen nodes --- one that colors unchosen nodes that are not underloaded (which is the one described in Section \ref{section:recoloring}) and another one for underloaded nodes. The idea of the procedure is to have the underloaded node steal a color from a neighbor which is not underloaded and then recolor that neighbor (which is easier to recolor since it is not underloaded). 

More concretely, suppose that $v$ is an underloaded node which is currently colored $i$ and is being recolored. Let $U_{v}$ be the set of $v$'s underloaded neighbors and let $W_{v}=N(v)-U_{v}$. The algorithm samples colors until finding a cheap $W_{v}$-good color for $v$. To check if a cheap color $j$ is $W_{v}$-good, the algorithm goes over $L_{j}$. If $L_{j}$ contains more than one neighbor of $v$ or an underloaded neighbor $u\in U_{v}$, then $j$ is not a $W_{v}$-good color. Otherwise, the algorithm adds the color histogram of $C_{v}$ and checks that in total, there is at most one node colored with $j$ in $W_{v}$ (which means that $j$ is a $W_{v}$-good color). Notice that for the latter step, the algorithm only needs to construct the color histogram of $C_{v}$ once before the random color trials. Moreover, it could be that an underloaded node later becomes overloaded in which case the color histogram of $C_{v}$ is already prepared (as discussed above).

When a cheap $W_{v}$-good color $j$ is found, $v$ gets colored with that color and the algorithm modifies the relevant data structures similarly to the procedure of Section \ref{section:recoloring}. If there is a node $w\in N(v)$ colored $j$, then it is recolored either by the procedure for recoloring unchosen nodes defined in Section \ref{section:recoloring} (in the case that $w\notin C$); or by the procedure for recoloring chosen node defined in Section \ref{section:overloaded} (in the case that $w\in C$). Notice that by definition, there can only be one such node $w$ and it is in $W_{v}$, i.e., it is not underloaded.

This concludes the description of the procedure for recoloring underloaded nodes. We establish the following simple observation. 
\begin{observation}\label{observation:underloaded}
	Let $v$ be an underloaded node and suppose that $|U_{v}|\leq \Delta/5$. Then, there are at least $\Delta/10$ cheap colors that are $W_{v}$-good for $v$.
\end{observation}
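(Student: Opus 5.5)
The plan is to apply Observation \ref{observation:q-good} with $Q=W_{v}$ and then discard the expensive colors, exactly as in the proof of Observation \ref{observation:good-colors}. Throughout we use the standing assumption $\Delta\geq 20\sqrt{n}\log n=20s$.

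\textbf{Case $|W_{v}|\geq \Delta/5$.} Set $k=\lfloor \Delta/10\rfloor$. Then $|W_{v}|\geq 2k$, and $k\geq 1$ since $\Delta\geq 20s$. Observation \ref{observation:q-good} applied to $Q=W_{v}\subseteq N(v)$ gives at least $k+1\geq \Delta/10$ colors that are $W_{v}$-good for $v$. The intended meaning of $W_{v}$-good is that no node of $U_{v}=N(v)-W_{v}$ has the color and at most one node of $W_{v}$ does.

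\textbf{Case $|W_{v}|<\Delta/5$.} By hypothesis $|U_{v}|\leq \Delta/5$, so $\deg(v)=|U_{v}|+|W_{v}|<2\Delta/5$. Hence at least $\Delta+1-2\Delta/5>\Delta/2$ colors do not appear in $N(v)$ at all. These colors are available for $v$ and therefore $W_{v}$-good.

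\textbf{Removing expensive colors.} As noted in Observation \ref{observation:good-colors}, the lists $L_{i}$ have total size at most $n$, so at most $\sqrt{n}/10$ colors are expensive. We must still have $\Delta/10$ cheap good colors left after removing them. Here lies the one delicate point: the count must be strengthened to roughly $\Delta/10+\sqrt{n}/10$, or the cases must be rebalanced.

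\emph{Second case.} We have more than $\Delta/2$ available colors, so subtracting $\sqrt{n}/10\leq \Delta/200$ leaves well over $\Delta/10$.

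\emph{First case.} Since $|W_{v}|<\Delta/5$ already yields many available colors, we move the case boundary. Split instead on $|W_{v}|\geq \Delta/4$, which makes $U_{v}$ irrelevant there.
\begin{itemize}
\item If $|W_{v}|\geq\Delta/4$, choose $k=\lfloor\Delta/8\rfloor$. This gives at least $\Delta/8$ good colors, and $\Delta/8-\Delta/200\geq \Delta/10$ cheap ones.
\item If $|W_{v}|<\Delta/4$, then $\deg(v)<\Delta/5+\Delta/4<\Delta/2$. So more than $\Delta/2$ colors are available, and still more than $\Delta/10$ of them are cheap.
\end{itemize}
The main obstacle is this constant bookkeeping; the structural content is entirely Observation \ref{observation:q-good}.
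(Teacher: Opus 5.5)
Your proof is correct and is essentially the paper's argument. In both, one case applies Observation~\ref{observation:q-good} with $Q=W_{v}$, the other uses $\deg(v)=|U_{v}|+|W_{v}|$ with $|U_{v}|\leq\Delta/5$ to get many available colors, and then at most $\sqrt{n}/10$ expensive colors are discarded. The only difference is the case boundary: the paper splits on whether $\deg(v)\geq 4\Delta/5$, so its dense case has $|W_{v}|\geq 3\Delta/5$ and more than $3\Delta/10$ good colors, which leaves room for the crude bound $\sqrt{n}/10<\Delta/10$ and avoids the false start in your first split.
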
 
\begin{proof}
	First, if $\deg(v)<4\Delta/5$, then at least $\Delta/5$ colors are available for $v$ and thus, also $W_{v}$-good for $v$. Since the total number of expensive colors is at most $\sqrt{n}/10<\Delta/10$, we get that at least $\Delta/5-\Delta/10=\Delta/10$ cheap colors are $W_{v}$-good for $v$. Now, suppose that $\deg(v)\geq 4\Delta/5$. Then, $|W_{v}|=\deg(v)-|U_{v}|\geq 3\Delta/5$. By Observation \ref{observation:q-good}, more than $3\Delta/10$ colors are $W_{v}$-good for $v$. Since there are fewer than $\Delta/10$ expensive colors in total, we get that there are at least $\Delta/5>\Delta/10$ cheap colors that are $W_{v}$-good for $v$.
\end{proof}

The following lemma now follows directly from Observation \ref{observation:underloaded}.
\begin{lemma}\label{lemma:unedrloaded}
	Suppose that $v$ is an underloaded node that undergoes a recoloring procedure and suppose that $|U_{v}|\leq \Delta/5$. Then, within $O(\log n)$ random color trials, a cheap $W_{v}$-good color for $v$ is found whp.
\end{lemma}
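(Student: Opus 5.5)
The proof follows the template of Lemma \ref{lemma:number-of-trials-until-good}, with Observation \ref{observation:underloaded} supplying the density of good colors. Fix the moment the procedure is invoked on the underloaded node $v$. The coloring $\col$ of all other nodes stays fixed throughout the random color trials, so the sets $U_{v}$ and $W_{v}$ are fixed as well. The set of cheap colors (those stored in $T_{\text{chp}}$) is also fixed, because nothing is recolored until a $W_{v}$-good color is found. Each trial is therefore an independent uniform sample from the fixed set of cheap colors.

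By Observation \ref{observation:underloaded}, under the hypothesis $|U_{v}|\leq \Delta/5$, at least $\Delta/10$ of the cheap colors are $W_{v}$-good for $v$. There are at most $\Delta+1$ cheap colors in total. Hence a single trial returns a cheap $W_{v}$-good color with probability at least
\[
\frac{\Delta/10}{\Delta+1}\geq \frac{1}{20}.
\]

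The trials are independent, so the probability that none of the first $20c\ln n$ trials succeeds is at most
\[
\left(1-\tfrac{1}{20}\right)^{20c\ln n}\leq e^{-c\ln n}=n^{-c}.
\]
Thus a cheap $W_{v}$-good color is found within $O(\log n)$ trials whp. The constant $c$ can be taken arbitrarily large, which also leaves room for a union bound over all recolorings in a phase if needed later.

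Once Observation \ref{observation:underloaded} is available, there is no real obstacle here. The only point that needs care is that the per-trial success probability is computed against a static coloring and a static cheap-color set. This holds because the trials are performed before any data structure is modified.
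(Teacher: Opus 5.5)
Your proof is correct and is essentially the paper's argument. The paper only says the lemma follows directly from Observation~\ref{observation:underloaded}; you make this explicit with the constant per-trial success probability $\frac{\Delta/10}{\Delta+1}\geq 1/20$, fed into the same independent-trials bound used in Lemma~\ref{lemma:number-of-trials-until-good}.
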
 

\subsection{Deamortizing the Preprocessing Stage}\label{section:deamortizing-sampling-chosen}
The second component that we seek to deamortize is the preprocessing of each phase. The idea here is that instead of a preprocessing stage in the beginning of each phase, the preprocessing of phase $i+1$ is implemented during the execution of phase $i$.\footnote{A small issue that arises here is with the preprocessing of the first phase. To address this issue, note that since the first phase starts from an empty graph, the preprocessing can be implemented simply by setting all nodes as chosen in that phase.} This implementation occurs ``behind the scenes'', in parallel to the execution of the current phase. We refer to this module as the \emph{behind the scenes (bts)} algorithm, whereas the execution of the other modules is referred to as the main algorithm. We emphasize that we \emph{do not} assume that the decisions made as part of the bts algorithm are hidden from the adaptive adversary.  

Consider phase $t$ of the main algorithm's execution and let $C^{t}$ denote the set of chosen nodes during this phase. We now describe how the set $C^{t+1}$ and the associated data structures of phase $t+1$ are computed by the bts algorithm during phase $t$.\footnote{At a first glance, the algorithm may seem to be space-inefficient as it defines designated data structures per phase. However, we note that during phase $t$, there is no need to store the data structures associated with phases $t'<t$ in the memory.} In the beginning of phase $t$, the set $C^{t+1}$ is empty. Let $V_{1},\dots, V_{\sqrt{n}}$ be an arbitrary partition of $V$ into $\sqrt{n}$ disjoint batches, each consisting of at most $\sqrt{n}$ nodes. The bts algorithm uses the first $\sqrt{n}$ updates of the phase to sample nodes into $C^{t+1}$. In particular, for each $1\leq i\leq \sqrt{n}$, during the $i$-th update the bts algorithm samples each node $v\in V_{i}$ into the set $C^{t+1}$ independently with probability $10s/\Delta$. If $v$ is not sampled into $C^{t+1}$ and its current color is $j$, then it is inserted to the list $L^{t+1}_{j}$ of unchosen nodes (w.r.t.\ $C^{t+1}$) colored $j$.

Following the sampling of $C^{t+1}$, the bts algorithm goes on to construct an orientation of $G[C^{t+1}]$ and a minimum heap $H^{t+1}_{v}$ for each $v\in C^{t+1}$. To that end, the execution is divided into \emph{blocks} of $O(\Delta/s)$ updates, where each block is used to construct the minimum heap $H^{t+1}_{v}$ of a distinct node $v\in C^{t+1}$. Let $v_{1},\dots, v_{|C^{t+1}|}$ be an arbitrary ordering of the nodes in $C^{t+1}$ (e.g., the order in which they were sampled into $C^{t+1}$). During the $i$-th block, the bts algorithm goes over the neighbors of node $v=v_{i}$ in batches of $O(s)$ nodes. For each unchosen neighbor $u$ of $v$
that is not overloaded with respect to $C^{t+1}$, 
the algorithm adds $u$'s color into $H^{t+1}_{v}$. If a neighbor $u$ of $v$ belongs to $C^{t+1}$ and appears before $v$ in the ordering, then the edge is oriented based on the greedy rule described in Section \ref{section:structural} (otherwise, the edge orientation is executed during $u$'s block). Whenever an edge is oriented from a node $u$ to a node $v$, the color of $u$ is added to $H^{t+1}_{v}$.  

Additional data structures maintained by the bts algorithm are the list $\ell^{t+1}_{\textup{ovr}}$ associated with nodes that are overloaded w.r.t.\ $C^{t+1}$ and the balanced tree $T^{t+1}_{\textup{und}}$ associated with nodes that are underloaded w.r.t.\ $C^{t+1}$. Notice that the bts algorithm is required to maintain the data structures associated with phase $t+1$ correctly during phase $t$. To achieve that, the bts algorithm updates these data structures in response to insertion/deletions during phase $t$. This is done in an analogous way to how the main algorithm updates its data structures.

The following Lemma captures the main properties achieved by the bts algorithm.
\begin{lemma}\label{lemma:bts}
	Consider some phase $t>1$. From the point the bts algorithm finishes sampling the set $C^{t}$ (during phase $t-1$) until the end of phase $t$, the following properties hold whp: (i) The out-degree of each $v\in C^{t}$ in $G[C^{t}]$ is $O(s)$; (ii) at most $\sqrt{n}$ nodes are overloaded w.r.t.\ $C^{t}$; and (iii) at most $\sqrt{n}$ nodes are underloaded w.r.t.\ $C^{t}$.
\end{lemma}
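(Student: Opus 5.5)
The plan is to reproduce the structure of Lemmas \ref{lemma:preprocessing}, \ref{lemma:number-of-choosings} and \ref{lemma:invariant-3}, but over a window of roughly $2n$ updates instead of $n$. This window starts when sampling of $C^{t}$ begins (during phase $t-1$) and ends when phase $t$ ends. The one new difficulty is that the adaptive adversary watches the sampling as it happens and can react before the sample is complete.

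\textbf{Step 1 (static properties with an adaptive sampler).} The sampling is spread over the first $\sqrt{n}$ updates of phase $t-1$. I will therefore measure the properties of Lemma \ref{lemma:preprocessing} at the end of sampling, and argue via the first time each node is sampled. Fix a node $v$ and let $\deg_0(v)$ be its degree when sampling starts. Each neighbor of $v$ is sampled at most once, independently with probability $10s/\Delta$, at a fixed time.

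During sampling only $\sqrt{n}\ll s$ updates occur. So the set of neighbors of $v$ at any moment differs by at most $\sqrt{n}$ nodes from the neighbors of $v$ at the moment those neighbors are sampled. The number of chosen neighbors of $v$ is thus, up to an additive $\sqrt{n}$, a sum of independent indicators over a set whose size is fixed up to $\pm\sqrt{n}$.

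Adaptivity is the subtle point. The adversary can add edges from $v$ to already-sampled chosen nodes, but each such edge costs an update, and only $\sqrt{n}$ updates are available. Formally, I would bound the count by the sampled neighbors among $N_0(v)$ plus or minus the number of updates incident to $v$. Concentration for $N_0(v)$ is then non-adaptive, since $N_0(v)$ is fixed before any coin is flipped.

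This yields, at the end of sampling and whp: fewer than $12s$ chosen neighbors for every node, and more than $3s$ chosen neighbors for every node of degree $>\Delta-2s$. It also gives $|C^{t}|=O(ns/\Delta)$.

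\textbf{Step 2 (overloaded and underloaded counts, (ii) and (iii)).} I will repeat the erratic-node argument of Lemma \ref{lemma:number-of-choosings} with threshold fixed at $13s$ and no additions to $C^{t}$, so there is no cascade. A node that becomes overloaded must gain more than $s$ chosen neighbors via insertions after sampling. A node that becomes underloaded either had degree $\le\Delta-2s$ and gained more than $s$ edges, or had more than $3s$ chosen neighbors and lost more than $s$ of them. In every case the node has at least $s$ incident updates inside the window. The window contains at most $2n$ updates, so at most $4n/s=4\sqrt{n}/\log n\le\sqrt{n}$ nodes qualify.

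\textbf{Step 3 (out-degree, (i)).} This follows Lemma \ref{lemma:invariant-3}. The bts orientation of $G[C^{t}]$ applies the greedy rule in block order, and later insertions use the same rule. Initial out-degrees are bounded by the $G[C^t]$-degree, which is $<12s$ at the end of sampling and rises by at most the number of incident insertions.

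One caveat is that the static orientation is built block by block while edges are also being inserted. Any edge oriented toward a node whose out-degree already exceeds $15s$ is either an original edge (at most $12s$ of them) or an inserted one. The counting argument still works: out-degree above $16s$ would require $s$ out-neighbors, each with at least $s$ inserted out-edges, hence $s^2/\text{const}$ insertions. Constants are adjusted (e.g.\ using $15s$ and $17s$), and since the window has only $2n<s^2/2$ updates this gives a contradiction. The bound is $O(s)$.

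\textbf{Main obstacle.} I expect Step 1 to be the hardest: making the concentration rigorous against an adversary that sees each batch's coin flips before the next batch is sampled. The reduction to the fixed set $N_0(v)$ plus an additive $\sqrt{n}$ error is what I would try first. A union bound over the $n$ nodes then completes the whp claim.
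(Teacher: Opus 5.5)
Your proposal is correct and follows the same route as the paper. Properties (1)--(2) of Lemma~\ref{lemma:preprocessing} hold when sampling ends because the adversary gets only $\sqrt{n}=o(s)$ updates during sampling; the erratic-node count of Lemma~\ref{lemma:number-of-choosings} and the orientation argument of Lemma~\ref{lemma:invariant-3} are then rerun over the remaining $\le 2n$ updates, giving at most $4n/s<\sqrt{n}$ overloaded/underloaded nodes and $O(s)$ out-degrees, and your reduction to the neighborhood $N_0(v)$ fixed before the coins are flipped, together with your check of the greedy rule under block-by-block orientation interleaved with insertions, just make explicit what the paper asserts in one line.
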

\begin{proof}
	Notice that the sampling of $C^{t}$ implemented in the first $\sqrt{n}$ updates of phase $t-1$ is similar to the global sampling of Section \ref{section:preprocessing}, apart from the fact that the adversary may decide to insert/delete edges after seeing (parts of) the sample. However, since the entire sampling occurs within $\sqrt{n}$ updates, the adversary can only insert/delete $\sqrt{n}=o(s)$ edges. In particular, when the bts algorithm finishes sampling $C^{t}$, it satisfies Properties (1) and (2) of Lemma \ref{lemma:preprocessing} whp. After that point, there are only $n'= 2n-\sqrt{n}$ updates until the end of phase $t$. Recall that Lemma \ref{lemma:number-of-choosings} shows that in $n$ updates, at most $2n/s$ nodes become underloaded/overloaded. So, plugging in $n'$, we get that a bound of $4n/s<4\sqrt{n}/\log n <\sqrt{n}$ on the size of both sets, implying Properties (ii) and (iii). Similarly, it follows from Lemma \ref{lemma:invariant-3} that after $n'$ updates, the out-degree of each node remains $O(s+\sqrt{n'})=O(s)$ implying Property (i).
\end{proof}

\subsection{Analysis}\label{section:worst-case-analysis}
We provide the proof of Theorem \ref{theorem:worst-case}.
\begin{proof}[Proof of Theorem \ref{theorem:worst-case}]
	It is straightforward from the construction that each recoloring procedure leads to at most $3$ nodes being recolored (which can be exactly $3$ when an underloaded node is recolored). Hence, the recourse is $3$. Furthermore, the correctness of the modified recoloring procedures follows by similar arguments to the ones used to show the correctness of the recoloring procedures of Section \ref{section:recoloring}. So, it is left to establish the update time.
	
	We first consider the bts algorithm of Section \ref{section:deamortizing-sampling-chosen} during some phase $t$. Notice that when constructing the set $C^{t+1}$, the bts algorithm goes over $\sqrt{n}$ nodes per update and performs an independent sampling. Similarly, to construct the edge orientation of $G[C^{t+1}]$ and the data structures $H_{v}^{t+1}$, the bts algorithm performs $O(\Delta)$ operations in blocks of $O(\Delta/s)$ updates such that in each update, $O(s)$ operations are performed. Regarding the changes the bts algorithm makes in response to edge insertions/deletions, notice that these are done in the same way as the main algorithm. Since these have a worst-case update time of $\otilde(\sqrt{n})$ in the main algorithm, the same holds for the bts algorithm.
	
	Regarding the runtime of the modified recoloring procedure for chosen nodes (Section \ref{section:overloaded}), we first recall that by Property (ii) of Lemma \ref{lemma:bts}, the number of overloaded nodes is bounded by $\sqrt{n}$ whp. Combined with Observation \ref{observation:overloaded}, we get that the procedure's runtime is $\otilde(\sqrt{n})$.
	
	We complete the proof by analyzing the recoloring procedure for underloaded nodes (Section \ref{section:underloaded}). First, note that by Property (iii) of Lemma \ref{lemma:bts}, the total number of underloaded nodes during a phase is at most $\sqrt{n}$ whp. In particular, this means that every node $v\in V$ has $|U_{v}|\leq \sqrt{n}< \Delta/5$ underloaded neighbors. Hence, by Lemma \ref{lemma:unedrloaded}, a cheap $W_{v}$-good color is found within $O(\log n)$ color trials. Since each sampled color $i$ satisfies $|L_{i}|=O(\sqrt{n})$, it follows that the runtime until finding a cheap $W_{v}$-good color is $\otilde (\sqrt{n})$. After finding a cheap $W_{v}$-good color, the algorithm may recolor a single non-underloaded node (which could be chosen or unchosen). As discussed in Lemma \ref{lemma:recoloring-time}, this recoloring takes $\otilde (\sqrt{n})$ time. 
\end{proof}

\subsection{Lifting the Knowledge of $\Delta$ Assumption}\label{section:knowledge}

The algorithm of Theorem \ref{theorem:worst-case} (as well as the algorithm of Theorem \ref{theorem:amortized-alg-statement}) has been presented under the simplifying assumption that the maximum degree $\Delta$ is known in advance. This assumption was used in the algorithm for the sampling probability of chosen nodes (which was set to $10s/\Delta$), the length of blocks in which these nodes are processed in the bts algorithm (which was set to $\Delta/s$), and the length of heaps $H_{v}$ of chosen nodes (which was set to $\Delta+1$). Additionally, the assumption allows the algorithm to distinguish in advance between the case of $\Delta<20\sqrt{n}\log n$ (in which the trivial $O(\Delta)$ algorithm is employed), and the case of $\Delta\geq 20\sqrt{n}\log n$ (in which the new algorithm is employed). In this section, we present the adjustments that allow the algorithm to work even without knowing $\Delta$ in advance.

Throughout this subsection, every histogram \(H_v\) is implemented as a balanced search tree storing only colors with positive counts, with the augmentation described below, in place of the minimum heaps used earlier.

Throughout the execution, the algorithm maintains the maximum degree observed so far as a variable $\hat{\Delta}$. At all times, the algorithm uses at most $\hat{\Delta}+1$ colors. We naturally describe the algorithm as the following two consecutive components: the first component which is implemented as long as  $\hat{\Delta}<20\sqrt{n}\log n$, and the second component which is implemented from the point $\hat{\Delta}$ reaches $20\sqrt{n}\log n$ and onward. 

During the first component, whenever the algorithm needs to recolor a node, it does so in $O(\hat{\Delta})$ time based on the trivial algorithm's update rule. In addition, the first component is used for the preprocessing of the first phase in the second component. To that end, throughout the first component the algorithm maintains an (arbitrary) orientation of the graph $G$, and for each node $v\in V$, stores the color histogram of $N^{*}(v)=N(v)-N^{+}(v)$ in a 
balanced search tree $H_v$ storing only colors with positive counts, augmented to find an absent color in logarithmic time as described below. We use this representation throughout both components.
Notice that because the maximum degree is bounded by $20\sqrt{n}\log n$, the update time in the first component is $\otilde(\sqrt{n})$.

Once $\hat{\Delta}$ reaches $20\sqrt{n}\log n$, the algorithm switches to the second component in which the algorithm of Theorem \ref{theorem:worst-case} is implemented. In the first phase of the second component, the algorithm sets the set of chosen nodes as $C^{1}=V$ and uses the edge orientation and data structures defined from the first component. In every phase $t$ of the second component, the algorithm implements the preprocessing for phase $t+1$ based on the following adjustments to the bts algorithm. First, the chosen nodes $C^{t+1}$ of phase $t+1$ are sampled with probability $10s/\hat{\Delta}^{t}$, where $\hat{\Delta}^{t}$ is the value of $\hat{\Delta}$ measured in the beginning of phase $t$. Following that, let $v_{1},\dots, v_{|C^{t+1}|}$ be the ordering in which the nodes of $C^{t+1}$ are processed. The block size of node $v_{i}$ is adjusted to $\lceil \deg(v_{i})/s\rceil$ where $\deg(v_{i})$ is measured at the starting time of $v_{i}$'s processing. The $H^{t+1}_{v}$ data structures are implemented as balanced trees that only explicitly maintain the colors that have positive number of occurrences in the color histogram. By having each tree-vertex keep track of the size, minimum color, and maximum color of its subtree, a node can find an available color in logarithmic time.

It may seem at first glance that since the adversary can arbitrarily increase the maximum degree, the sampling probability which was set based on the previous maximum degree can be too high and cause a violation to some of the algorithm's properties. To see that this is not the case, notice that the issue with adversarial increase to the degree is that some nodes may have ``too many" chosen neighbors. Note however, that this type of nodes was already accounted for in the algorithm that assumes knowledge of $\Delta$ --- these were the overloaded nodes. Moreover, Lemma \ref{lemma:bts} suggests that there are at most $\sqrt{n}$ such nodes in a phase whp. Notice that this statement holds regardless of the increase to the maximum degree. Therefore, the algorithm's properties are preserved and we can use a similar analysis to that of Theorem \ref{theorem:worst-case} to derive the update time bound and correctness.

	
	\subsection*{AI Statement}
	The authors developed all the results and wrote the paper without AI assistance. In developing this full version, Opus 5.5 was used to detect minor gaps in arguments and presentation.
	\clearpage
	\bibliographystyle{alpha}
	
	\bibliography{references}
\end{document}